\newcommand{\Cbb}{\mathbb{C}}
\newcommand{\Ccal}{\mathcal{C}}
\newcommand{\Dcal}{\mathcal{D}}
\newcommand{\Ecal}{\mathcal{E}}
\newcommand{\Fcal}{\mathcal{F}}
\newcommand{\grad}{\mathrm{grad} \,}
\newcommand{\Hc}{\mathcal{H}_{\rmc}}
\newcommand{\Hcal}{\mathcal{H}}
\newcommand{\He}{\mathcal{H}_{\rme}}
\newcommand{\Hm}{\mathrm{H}}
\newcommand{\Hs}{\mathcal{H}_{\rms}}
\newcommand{\Ic}{\openone_{\rmc}}
\newcommand{\Ie}{\openone_{\rme}}
\newcommand{\Imag}{\mathrm{Im}}
\newcommand{\Is}{\openone_{\rms}}
\newcommand{\J}{\mathcal{J}_{\tf}^{[V]}}
\newcommand{\Jsys}{\mathcal{J}_{\tf}^{[\Vsys]}}
\newcommand{\Kbb}{\mathbb{K}}
\newcommand{\Kcal}{\mathcal{K}}
\newcommand{\Lcal}{\mathcal{L}}
\newcommand{\Mcal}{\mathcal{M}}
\newcommand{\Ncal}{\mathcal{N}}
\newcommand{\Qcal}{\mathcal{Q}}
\newcommand{\Rbb}{\mathbb{R}}
\newcommand{\Real}{\mathrm{Re}}
\newcommand{\rmc}{\mathrm{c}}
\newcommand{\rHS}{\mathrm{HS}}
\newcommand{\rmint}{\mathrm{int}}
\newcommand{\rmk}{\mathrm{k}}
\newcommand{\rml}{\mathrm{l}}
\newcommand{\rmm}{\mathrm{m}}
\newcommand{\rms}{\mathrm{s}}
\newcommand{\rmT}{\mathrm{T}}
\newcommand{\rmu}{\mathrm{u}}
\newcommand{\Scal}{\mathcal{S}}
\newcommand{\tf}{t_{\mathrm{f}}}
\newcommand{\Tre}{\mathrm{Tr_{\rme}}}
\newcommand{\Trs}{\mathrm{Tr_{\rms}}}
\newcommand{\Uc}{\mathrm{U}(\Hc)}
\newcommand{\Ue}{\mathrm{U}(\He)}
\newcommand{\Uenv}{U_{\rme}}
\newcommand{\Us}{\mathrm{U}(\Hs)}
\newcommand{\Usys}{U_{\rms}}
\newcommand{\Venv}{V_{\rme}}
\newcommand{\Vsys}{V_{\rms}}
\newtheorem{definition}{Definition}
\newtheorem{lemma}{Lemma}
\newtheorem{theorem}{Theorem}
\begin{document}

\title[Environment-invariant measure of distance between
 evolutions]{Environment-invariant measure of distance between
 evolutions of an open quantum system}

\author{Matthew~D.~Grace$^{1}$, Jason~Dominy$^{2}$,
 Robert~L.~Kosut$^{3}$, Constantin~Brif\,$^{4}$ and
 Herschel~Rabitz$^{4}$}

\address{$^{1}$Department of Scalable Computing Research \& Development,
 Sandia National Laboratories, Livermore, CA 94550}

\address{$^{2}$Program in Applied \& Computational Mathematics,
 Princeton University, Princeton, New Jersey 08544}

\address{$^{3}$SC Solutions, Inc., Sunnyvale, CA 94085}

\address{$^{4}$Department of Chemistry, Princeton University, Princeton,
 New Jersey 08544}

\eads{\mailto{mgrace@sandia.gov}, \mailto{jdominy@math.princeton.edu},
 \mailto{kosut@scsolutions.com}, \mailto{cbrif@princeton.edu}, and
 \mailto{hrabitz@princeton.edu}}

\date{\today}

\begin{abstract}
The problem of quantifying the difference between evolutions of an open
quantum system (in particular, between the actual evolution of an open
system and the ideal target operation on the corresponding closed
system) is important in quantum control, especially in control of
quantum information processing. Motivated by this problem, we develop
a measure for evaluating the distance between unitary evolution
operators of a composite quantum system that consists of a sub-system of
interest (e.g., a quantum information processor) and environment. The
main characteristic of this measure is the invariance with respect to
the effect of the evolution operator on the environment, which follows
from an equivalence relation that exists between unitary operators
acting on the composite system, when the effect on only the sub-system
of interest is considered. The invariance to the environment's
transformation makes it possible to quantitatively compare the evolution
of an open quantum system and its closed counterpart. The distance
measure also determines the fidelity bounds of a general quantum channel
(a completely positive and trace-preserving map acting on the sub-system
of interest) with respect to a unitary target transformation. This
measure is also independent of the initial state of the system and
straightforward to numerically calculate. As an example, the measure is
used in numerical simulations to evaluate fidelities of optimally
controlled quantum gate operations (for one- and two-qubit systems), in
the presence of a decohering environment. This example illustrates the
utility of this measure for optimal control of quantum operations in the
realistic case of open-system dynamics.
\end{abstract}

\pacs{02.30.Yy, 03.67.-a, 03.65.Yz}

\submitto{\NJP}

\maketitle

\section{Introduction}
\label{sec:intro}

Control and optimization of open quantum systems is a rapidly growing
area of theoretical \cite{Lloyd00a, Lloyd01a, Brif01a, Zhu03a,
Altafini03a, Altafini04a, Sklarz04a, Khodjasteh05a, Romano06a,
Jirari06a, Wenin06a, Li06a, Wu07a, Mohseni09a, Rabitz09a} and
experimental \cite{Brixner01a, Herek02a, Felinto05a, Morton06a,
Brander08a, Walle09a, Biercuk09a, Roth09a} research. This research is
connected, to a considerable degree, to a large body of work that has
been recently devoted to integrating concepts and methods of optimal
quantum control \cite{Rabitz00a, Rabitz00b, Walmsley03a, Balint08a} into
the fields of quantum information and quantum computation (QC)
\cite{Nielsen00a}. In particular, optimal control theory was used to
design external control fields for generating quantum-gate operations in
the presence of an environment \cite{Grig05a, Schulte06a, Hohen06a,
Grace07a, Grace07b, Wenin08a, Wenin08b, Reben09a}. In this context, an
important problem is to evaluate the distance between different quantum
operations, especially between actual and ideal processes
\cite{Nielsen00a, Uhlmann76a, Jozsa94a, Schumach96a, Fuchs99a,
Kretsch04a, Gilchrist05a, Lidar08a}. One obstacle in the comparison of
such processes is the reliance of many existing distance and fidelity
measures on the initial state of the system or, to a lesser extent, on
an integration or optimization over all possible states. Another
difficulty arises when we consider the evolution of the quantum system
of interest, in this case, a quantum information processor (QIP), that
is coupled to a (finite-dimensional) environment. The main obstacle is
that the unitary time-evolution operator for the composite system
(consisting of the QIP and environment) and the target unitary operator
for the QIP have different dimensions.

In this work, we develop a measure (originally proposed in
\cite{Kosut06a}) that quantifies the distance between evolutions of a
composite (bipartite) quantum system in such a way that only the
effect of the evolution operator on one component (the sub-system of
interest) is relevant in the measure, while the effect on the other
component (the environment) can be arbitrary. Due to the invariance to
the transformation on the environment, it is possible to evaluate the
difference between unitary evolutions of the composite system (including
the environment) and the sub-system of interest alone (excluding the
environment), i.e., between unitary quantum operations of different
dimensions. Therefore, the proposed measure enables us to compare
unitary quantum operations (or, more generally, quantum channels
\cite{Nielsen00a}) corresponding to dynamics of ideal (i.e., closed) and
real (i.e., open) systems. When the sub-system of interest is not
coupled to an environment, the proposed distance measure reduces to a
form that was previously used to evaluate the fidelity of closed-system
unitary operations (see, e.g., \cite{Palao02a, Palao03a, Sporl07a}).
Another important advantage of this distance measure is that it is
independent of the initial state of the system. This property is
important, since the fundamental objective of QC is to generate a
specified target transformation for \emph{any} initial state.

The development of an appropriate objective functional is a crucial
issue for control and optimization (for both classical and quantum
systems) \cite{Stengel94a}. Since maxima or minima of the objective
functional determine control optimality, the selection of the objective
is just as important as that of the physical model and optimization
algorithm. Therefore, constructing an objective functional that
represents control goals in the best possible way is one of the first
tasks in any application of optimal control. In addition, the emerging
studies of quantum control landscapes \cite{Chakr07a}, for
state-preparation control \cite{Rabitz04a, Rabitz06a}, observable
control \cite{Rabitz06b, Wu08a, Hsieh09a}, and unitary-transformation
control \cite{Rabitz05a, Hsieh08a, Ho09a}, emphasize the profound
effects that the choice of the objective functional has on the
optimization [the control landscape is defined by the objective as a
function of control variables, $\mathcal{J} = \mathcal{J}(C)$]. In
particular, the relationship between the control landscape structure and
optimization search efforts was recently analyzed \cite{Moore08a,
Oza09a}. The measure we present here is well-suited for directly
evaluating distances between general operations resulting from control
of open quantum systems. The utility of this distance measure for
optimal control of quantum gates in the presence of an environment is
illustrated with a detailed numerical example.

This article is organized as follows. Section~\ref{sec:equiv} presents
an equivalence relation between unitary evolution operators acting on a
composite (bipartite) quantum system. This relation is based on the
effect of these operators on the sub-system of interest, while the
effect on the other sub-system (representing the environment) is not
taken into account and thus can be arbitrary. The equivalence relation
produces a quotient space that is a homogeneous manifold of the original
space of unitary operators. In section~\ref{sec:quot-metric}, a quotient
metric is defined on this space, resulting in a generalized expression
of the distance measure. This measure can be used to quantitatively
compare unitary evolutions of the entire composite system and one of its
components (the sub-system of interest), e.g., unitary operations of
different dimensions. Particular forms of the general distance measure
can be obtained using different matrix norms. In section~\ref{sec:HS},
the distance measure is computed explicitly using the Hilbert-Schmidt
norm, producing an analytical expression that is calculated directly
from the unitary operator of the composite system. In addition, several
properties of this specific measure are discussed.
Section~\ref{sec:singular} develops the distance measure using the
induced two-norm (i.e., the maximum singular value). In
section~\ref{sec:fidelity}, we relate the distance measure developed in
section~\ref{sec:HS} to a few typical fidelity measures defined for a
general quantum channel (acting on the sub-system of interest) with
respect to a target unitary transformation. It is shown that the
distance measure and an average channel fidelity both depend on the
trace norm of the same matrix (as the matrix norm increases, distance
decreases and fidelity increases), and that the distance measure sets
the lower and upper bounds of this fidelity. Section~\ref{sec:control}
illustrates, using a numerical example, how the distance measure
developed here can be directly incorporated into the optimal control of
quantum operations in the presence of a decohering environment. In this
application, the actual unitary evolution operator for the composite
system (consisting of a QIP and its environment) is compared to the
target unitary operator for the QIP alone, which is of a smaller
dimension. This example demonstrates the utility of the distance measure
in optimal control of unitary quantum gates for QC. Lastly,
section~\ref{sec:conclusion} concludes the paper with a summary of the
results and a brief discussion of future directions. For the continuity
of the presentation, some mathematical details and most proofs are
relegated to appendices.

\section{Equivalence classes of unitary evolution operators}
\label{sec:equiv}

Consider a finite-dimensional, closed, bipartite quantum system with a
Hamiltonian (possibly time-dependent) of the form
\begin{equation}
\label{eq:Hamiltonian1}
H := H_{\rms} + H_{\rme} + H_{\rmint},
\end{equation}
where $H_{\rms}$ and $H_{\rme}$ represent the Hamiltonians for the
system $\Scal$ (e.g., a QIP) and environment $\Ecal$, respectively, and
$H_{\rmint}$ represents the system-environment interaction. Let $\Hs$
and $\He$ denote the respective Hilbert spaces of the system and
environment, where $n_{\rms} := \dim\{ \Hs \}$ and $n_{\rme} := \dim\{
\He \}$. Thus, the Hilbert space of the composite system $\Ccal$ is $\Hc
:= \Hs \otimes \He$ and $n := \dim \{ \Hc \} = n_{\rms} n_{\rme}$. Let
$\{ |i\rangle \}$, $\{ |\nu\rangle \}$ and $\{ |i\rangle \otimes
|\nu\rangle \}$ be orthonormal bases that span the Hilbert spaces $\Hs$,
$\He$, and $\Hc$, respectively.

For a given Hilbert space $\Hcal$, the set of admissible states,
represented as density matrices on $\Hcal$, is denoted by $\Dcal
(\Hcal)$. As such, a density matrix $\rho \in \Dcal (\Hcal)$ is a
positive operator of trace one, i.e., $\rho \geq 0$ and $\Tr(\rho) =
1$. Elements of $\Dcal (\Hcal)$ correspond to either pure $\left[ \Tr
\left( \rho^{2} \right) = 1 \right]$ or mixed $\left[ \Tr \left(
\rho^{2} \right) < 1 \right]$ states. Density matrices for pure states
can also be expressed as $|\psi\rangle \langle\psi|$, where
$|\psi\rangle$ represents a normalized vector in $\Hcal$ (i.e.,
$\langle\psi|\psi\rangle$ = 1). The group of all unitary operators on
$\Hcal$ is denoted by $\mathrm{U}(\Hcal)$.

Let $U(t) \in \Uc$ denote the unitary time-evolution operator of the
composite system $\Ccal$, whose evolution is governed by the
Schr\"{o}dinger equation ($\hbar = 1$):
\begin{equation} 
\label{eq:Schrodinger} 
\frac{\rmd U(t)}{\rmd t} = - \rmi H(t) U(t),
\end{equation} 
with the initial condition $U(t = 0) = \Ic$, the identity operator on
$\Hc$. Given an initial density matrix $\rho(t = 0) \in
\Dcal(\Hc)$, this state evolves in time as
\begin{equation}
\label{eq:density-evolution}
\rho(t) = U(t)\rho(0) U^{\dag}(t).
\end{equation}
For any state $\rho \in \Dcal(\Hc)$, the state of the system $\Scal$,
with which system observables are calculated, is described by the
reduced density matrix $\rho_{\rms} := \Tre(\rho)$, where $\Tre$ denotes
the partial trace over $\He$. Thus, from (\ref{eq:density-evolution}),
$\rho_{\rms}$ evolves as
\begin{equation}
\rho_{\rms}(t) = \Tre \big[ U(t) \rho(0) U^{\dag}(t) \big].
\end{equation}
If two composite unitary operators $U, V \in \Uc$ are such that 
\begin{equation}
\label{eq:equiv}
\Tre(U \rho U^{\dag}) = \Tre( V \rho V^{\dag}), \ \ \forall \rho \in
\Dcal(\Hc),
\end{equation}
then they will produce identical evolutions of the reduced density
matrix $\rho_{\rms}$. In this situation, $U$ and $V$ are said to be
\emph{system}- or $\Scal$-\emph{equivalent}, a relationship expressed as
$U \sim V$. The condition described by (\ref{eq:equiv}) defines an
equivalence relation \cite{Naylor82a} (in addition to being symmetric
and reflexive, $\Scal$-equivalence is also transitive, i.e., if $U_{1}
\sim U_{2}$ and $U_{2} \sim U_{3}$, then $U_{1} \sim U_{3}$). The
necessary and sufficient condition for $\Scal$-equivalence is stated in
the following theorem.

\begin{theorem}
\label{thm:equiv} 
Let $U, V \in \Uc$. Then $\Tre \left( U \rho U^{\dag}
\right) = \Tre \left( V \rho V^{\dag} \right)$ for all density matrices
$\rho \in \Dcal(\Hc)$ if and only if $U = (\Is \otimes \Phi)V$ for
some $\Phi \in \Ue$.
\end{theorem}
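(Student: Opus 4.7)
The plan is to dispose of the $(\Leftarrow)$ direction by a short algebraic identity and to attack the $(\Rightarrow)$ direction by testing the hypothesis on a small, well-chosen family of pure product states. For $(\Leftarrow)$, substituting $U = (\Is \otimes \Phi)V$ into $\Tre(U\rho U^\dag)$ reduces the claim to the identity $\Tre\bigl((\Is \otimes \Phi) Y (\Is \otimes \Phi^\dag)\bigr) = \Tre(Y)$ for every operator $Y$ on $\Hc$; this follows by expanding $Y$ in a tensor-product basis and invoking cyclicity of the full trace on the environment factor.

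For $(\Rightarrow)$ I would set $W := UV^\dag$. Because $V$ is unitary, $\sigma := V\rho V^\dag$ ranges over all of $\Dcal(\Hc)$ as $\rho$ does, so the hypothesis rewrites as $\Tre(W\sigma W^\dag) = \Tre(\sigma)$ for every $\sigma \in \Dcal(\Hc)$, and the goal becomes: show $W = \Is \otimes \Phi$ for some $\Phi \in \Ue$, from which $U = (\Is \otimes \Phi)V$ follows immediately.

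The core computation would then be two tests against pure states. Expanding $W|i,\nu\rangle = \sum_k |k\rangle \otimes |\phi_{i\nu,k}\rangle$ and testing $\sigma = |i,\nu\rangle\langle i,\nu|$, a direct comparison of $\Tre(W\sigma W^\dag)$ with $\Tre(\sigma) = |i\rangle\langle i|$ forces $\langle \phi_{i\nu,l}|\phi_{i\nu,k}\rangle = \delta_{ki}\delta_{li}$. This kills every component with $k \ne i$, giving $W|i,\nu\rangle = |i\rangle \otimes |\phi_{i\nu}\rangle$ with $\||\phi_{i\nu}\rangle\| = 1$. Testing next on $|\psi\rangle = (|i\rangle + |j\rangle)\otimes|\nu\rangle/\sqrt{2}$ with $i \ne j$ and matching the off-diagonal $|i\rangle\langle j|$ block on both sides yields $\langle \phi_{j\nu}|\phi_{i\nu}\rangle = 1$; combined with unit norms, Cauchy--Schwarz then forces $|\phi_{i\nu}\rangle = |\phi_{j\nu}\rangle$, so $|\phi_\nu\rangle := |\phi_{i\nu}\rangle$ is independent of $i$. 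Defining $\Phi|\nu\rangle := |\phi_\nu\rangle$ on the chosen basis of $\He$ and extending linearly, linearity of $W$ yields $W(|\xi\rangle \otimes |\chi\rangle) = |\xi\rangle \otimes \Phi|\chi\rangle$ for all $|\xi\rangle \in \Hs$ and $|\chi\rangle \in \He$, i.e., $W = \Is \otimes \Phi$; unitarity of $W$ then makes $\Phi$ unitary.

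I expect the main obstacle to be the bookkeeping in the partial-trace computations with the doubly-indexed vectors $|\phi_{i\nu,k}\rangle$, and ensuring that the extraction of the two key scalar equations is cleanly stated. Once the diagonal test shows that $W$ cannot move probability off the system label $i$ and the superposition test pins down the environment action as a single unitary, the remaining steps are routine linear algebra. A minor point to address along the way is that only pure-state density matrices (in fact, only product pure states) are needed for the argument, even though the hypothesis is stated for all of $\Dcal(\Hc)$.
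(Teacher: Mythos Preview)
Your argument is correct, but it proceeds quite differently from the paper's. For $(\Rightarrow)$ the paper extends the hypothesis by linearity to all $A\in\Mcal_n(\Cbb)$, pairs both sides against an arbitrary $B\in\Mcal_{n_{\rms}}(\Cbb)$ via the Hilbert--Schmidt inner product, identifies the adjoint of $\Tre$ as $B\mapsto B\otimes\Ie$, and thereby deduces that $UV^\dag$ commutes with every $B\otimes\Ie$; a separate group-theoretic lemma then pins down the centralizer of $\Us\otimes\Ie$ in $\Uc$ as $\Is\otimes\Ue$. Your route is instead constructive and state-based: you probe $W=UV^\dag$ with product basis states to force $W|i,\nu\rangle=|i\rangle\otimes|\phi_{i\nu}\rangle$, then with two-term system superpositions to force $|\phi_{i\nu}\rangle$ to be $i$-independent, and read off $\Phi$ directly. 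Your approach is more elementary and entirely self-contained, with no appeal to the centralizer computation; the paper's approach is more structural, makes the connection to the commutant (and hence to the unitary freedom in Kraus representations) explicit, and generalises more readily to other subgroups. Both are complete once the bookkeeping you flag is carried out.
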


\begin{proof}
See \ref{app:equiv}.
\end{proof}

This theorem is related to a unitary invariance of Kraus maps
\cite{Nielsen00a, Kraus83a}, which is described further in
section~\ref{sec:fidelity}.

For any $U \in \Uc$, let $[U]$ denote the equivalence class of $U$,
i.e., $[U] = \{ (\Is \otimes \Phi) U : \Phi \in \Ue \}$. As a
consequence of Theorem~\ref{thm:equiv}, every equivalence class is
isomorphic to the closed subgroup $\Is \otimes \Ue = \{ \Is \otimes
\Phi : \Phi \in \Ue \}$, i.e., $[U]$ is just a right-translation of
$\Is \otimes \Ue$ by $U$. Therefore, if only the evolution of $\Scal$ is
relevant, then just the space of $\Scal$-equivalence classes, i.e., the
\emph{quotient space} $\Qcal := \Uc / \Is \otimes \Ue$ needs to be
considered, rather than the space of all unitary operators $\Uc$. While
$\Qcal$ is not a Lie group like $\Uc$, it is a smooth (right)
homogeneous manifold of dimension $(n_{\rms}^{2}-1) n_{\rme}^{2}$
\cite{Warner83a}. Although not explored here, the fact that $\Qcal$ has
the additional structure of a homogeneous manifold may be useful for
further analysis (e.g., explorations of the landscape topology
\cite{Ho09a}) of the distance measure.

\section{Properties of metrics on the quotient space of
equivalent operators}
\label{sec:quot-metric}

A \emph{metric space} consists of a pair of objects: a set $\Omega$ and
a real-valued \emph{metric} or \emph{distance measure} $\Delta$. The
ordered pair $(\Omega, \Delta)$ is a metric space if the following
conditions are satisfied for any $x, y, z \in \Omega$ \cite{Naylor82a}:
\begin{enumerate}
\item{Non-negativity: $\Delta(x, y) \geq 0$}
\item{Strict positivity: $\Delta(x, y) = 0$ if and only if $x = y$}
\item{Symmetry: $\Delta(x, y) = \Delta(y, x)$}
\item{Triangle inequality: $\Delta(x, z) \leq \Delta(x, y) + \Delta(y,
 z)$}
\end{enumerate}

In this work, $\Omega$ is the quotient space $\Qcal$ developed in
section~\ref{sec:equiv}; elements of this space are the equivalence
classes $[U]$. Let $\Delta$ be any metric on $\Uc$, then the
corresponding \emph{quotient metric} $\tilde{\Delta}$ on $\Qcal$ is
\begin{equation}
\label{eq:quot-metric}
\tilde{\Delta} ([U], [V]) := \min_{\Phi_{1}, \Phi_{2}} \Delta
\big( (\Is \otimes \Phi_{1}) U, (\Is \otimes \Phi_{2}) V \big).
\end{equation}
This expression can be used to obtain the quotient metric for various
choices of $\Delta$, for example, when $\Delta$ is the Hilbert-Schmidt
norm distance, the induced two-norm distance, the geodesic distance with
respect to some Riemannian metric, etc. \cite{Reed80a, Horn90a,
doCarmo92a}.

The quotient metric (\ref{eq:quot-metric}) does not involve the partial
trace over the environment, which is implicit in other distance (and
fidelity) measures that use the operator-sum representation
\cite{Nielsen00a, Jozsa94a, Schumach96a, Fuchs99a, Kretsch04a,
Gilchrist05a, Lidar08a}. Moreover, the quotient metric is independent of
any particular initial state, which is a very useful practical feature
in applications to QC. In contrast to other measures of quantum gate
fidelity, e.g., some of those in \cite{Nielsen00a, Jozsa94a,
Schumach96a, Fuchs99a, Kretsch04a, Gilchrist05a, Lidar08a},
$\tilde{\Delta} ([U], [V])$ is directly evaluated by propagating the
evolution operator $U(t)$ in (\ref{eq:Schrodinger}), without making any
assumptions about the initial state of the composite system
$\Ccal$. This property allows one to consistently quantify the distance
between unitary quantum operations (e.g., the actual and ideal
operation) for \emph{any} initial state.

If $\Delta$ is a left-invariant metric on $\Uc$ [i.e., $\Delta(WU, WV)
= \Delta(U, V)$ for any $U, V, W \in \Uc$], then
(\ref{eq:quot-metric}) becomes 
\begin{equation}
\label{eq:quot-metric-left}
\tilde{\Delta} ([U], [V]) = \min_{\Phi} \Delta \big( U,(\Is
\otimes \Phi) V \big).
\end{equation}
In the context of a norm distance, (\ref{eq:quot-metric-left}) can also
be expressed as
\begin{equation}
\label{eq:norm-dist}
\tilde{\Delta} ([U], [V]) = \lambda \min_{\Phi} \| U - (\Is \otimes
\Phi) V \|,
\end{equation}
where $\lambda$ is a specified normalization factor and $\| \cdot \|$
is any left-invariant matrix norm on the space of $n \times n$ complex
matrices, denoted as $\Mcal_{n} \left( \Cbb \right)$.

In the following sections, we introduce some useful properties of
unitarily invariant quotient metrics on the homogeneous space $\Qcal$.

\subsection{Closed-system target transformations}
\label{ssec:closedsystem-target}

A very typical situation in QC occurs when the target unitary
transformation is specified for an ideal closed system (i.e., for
$\Scal$ alone, excluding coupling to the environment). Let $U$ denote
the actual evolution operator for the composite system $\Ccal$ and $V$
denote the target quantum operation. If the target is defined for the
closed system, then $V$ is decoupled: $V = \Vsys \otimes \Venv$, where
$\Vsys \in \Us$ is the specified target unitary transformation for
$\Scal$ and $\Venv \in \Ue$ is arbitrary. Because $\Venv$ is an
arbitrary operation on $\Ecal$, it can be incorporated into $\Phi_{2}$
in (\ref{eq:quot-metric}):
\begin{displaymath}
\min_{\Phi_{1}, \Phi_{2}} \Delta \big( (\Is \otimes \Phi_{1}) U, (\Is
\otimes \Phi_{2}) (\Vsys \otimes \Venv) \big)
= \min_{\Phi_{1}, \Phi_{2}} \Delta \big( (\Is \otimes \Phi_{1}) U, (\Is
\otimes \Phi_{2}) (\Vsys \otimes \Ie) \big).
\end{displaymath}
Similarly, $\Venv$ can be incorporated into $\Phi$ in
(\ref{eq:quot-metric-left}) and (\ref{eq:norm-dist}). Therefore, for
the closed-system target $\Vsys$, expressions (\ref{eq:quot-metric}),
(\ref{eq:quot-metric-left}), and (\ref{eq:norm-dist}) of the quotient
metric are reformulated as
\begin{subequations}
\begin{align}
& \tilde{\Delta} ([U], [\Vsys \otimes \Ie]) = 
\min_{\Phi_{1}, \Phi_{2}} \Delta \big( (\Is \otimes \Phi_{1}) U, 
(\Vsys \otimes \Phi_{2}) \big), \\
& \tilde{\Delta} ([U], [\Vsys \otimes \Ie]) = 
\min_{\Phi} \Delta \big( U,(\Vsys \otimes \Phi) \big), \\
\intertext{and}
& \tilde{\Delta} ([U], [\Vsys \otimes \Ie]) = 
\lambda \min_{\Phi} \| U - (\Vsys \otimes \Phi) \|,
\end{align}
\end{subequations}
respectively. From a practical point of view, these results mean
that we can use the distance measure $\tilde{\Delta} ([U], [\Vsys
\otimes \Ie])$ to quantitatively compare the actual evolution
operator $U \in \Uc$ of the composite system $\Ccal$ (of dimension $n$)
and the target transformation $\Vsys \in \Us$ specified for the
sub-system $\Scal$ (of a smaller dimension $n_{\rms}$).

\subsection{Chaining of unitary operations}
\label{ssec:chaining}

Consider a physical process composed of several components, such as
the application of a sequence of $n$ unitary operations, $U_{n} \ldots
U_{2} U_{1}$, acting on the composite system, compared to the
corresponding sequence of target unitary operations, $V_{n} \ldots
V_{2} V_{1}$, where $U_{i}, V_{i} \in \Uc$. In this context, a metric
or distance measure is said to possess the \emph{chaining} criterion
\cite{Gilchrist05a} if
\begin{equation}
\label{eq:chain}
\Delta \left( U_{m} \ldots U_{2} U_{1}, V_{m} \ldots V_{2} V_{1} \right)
\leq \Delta \left( U_{1}, V_{1} \right) + \Delta \left( U_{2}, V_{2}
\right) + \ldots + \Delta \left( U_{m}, V_{m} \right).
\end{equation}
Thus, the total error of the process is bounded from above by the sum of
the individual errors $\Delta(U_{i}, V_{i})$. This property is useful
for analyzing the overall fidelity of a quantum algorithm, consisting
of, e.g., a sequence of one- and two-qubit operations \cite{Nielsen00a,
 Gilchrist05a}.

\begin{theorem}
If $\Delta$ is a bi-invariant metric on $\Uc$ [i.e., $\Delta(W_{1} U
W_{2}, W_{1} V W_{2}) = \Delta(U, V)$ for any $U, V, W_{1}, W_{2} \in
\Uc$], then the chaining criterion (\ref{eq:chain}) holds for any
$U_{1}, U_{2}, \ldots, U_{m} \in \Uc$ and $V_{1}, V_{2}, \ldots, V_{m}
\in \Us \otimes \Ue$.
\end{theorem}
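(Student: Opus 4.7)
The plan is a standard telescoping argument combining the triangle inequality with the assumed bi-invariance of $\Delta$.

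First, I would introduce the intermediate ``hybrid'' products
$A_{k} := V_{m} V_{m-1} \cdots V_{k+1} U_{k} U_{k-1} \cdots U_{1}$
for $k = 0, 1, \ldots, m$, so that $A_{m} = U_{m} \cdots U_{1}$ is the left-hand argument of (\ref{eq:chain}) and $A_{0} = V_{m} \cdots V_{1}$ is the right-hand argument. Iterating the triangle inequality along this chain yields
\[ \Delta \left( U_{m} \cdots U_{1}, V_{m} \cdots V_{1} \right) = \Delta(A_{m}, A_{0}) \leq \sum_{k=1}^{m} \Delta(A_{k}, A_{k-1}). \]

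Next, I would observe that consecutive hybrids $A_{k}$ and $A_{k-1}$ differ only in their $k$-th factor ($U_{k}$ versus $V_{k}$), sandwiched between the common prefix $W_{1} := V_{m} \cdots V_{k+1}$ on the left and the common suffix $W_{2} := U_{k-1} \cdots U_{1}$ on the right (with the conventions $W_{1} = \Ic$ when $k = m$ and $W_{2} = \Ic$ when $k = 1$). Applying bi-invariance of $\Delta$ to each term collapses it to
\[ \Delta(A_{k}, A_{k-1}) = \Delta \left( W_{1} U_{k} W_{2}, W_{1} V_{k} W_{2} \right) = \Delta(U_{k}, V_{k}), \]
and summing over $k$ gives exactly the chaining inequality (\ref{eq:chain}).

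There is no substantive obstacle; the argument is purely algebraic. The only mild point to watch is choosing the telescoping order so that adjacent hybrids differ in a single factor flanked by identical operators, which is what allows bi-invariance to be applied cleanly in one stroke (as opposed to merely left- or right-invariance). I would also remark that the hypothesis $V_{i} \in \Us \otimes \Ue$ is not actually used in the argument — bi-invariance alone yields the conclusion for arbitrary $V_{i} \in \Uc$ — so the restriction is presumably stated only to align with the closed-system-target setting of Section~\ref{ssec:closedsystem-target}, where the $V_{i}$ are naturally decoupled.
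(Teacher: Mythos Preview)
Your telescoping argument is correct for the metric $\Delta$ as literally referenced by~(\ref{eq:chain}), and you are right that for $\Delta$ itself the hypothesis $V_{i}\in\Us\otimes\Ue$ plays no role. However, the paper's own proof (Appendix~\ref{app:chain}) establishes the chaining inequality not for $\Delta$ but for the \emph{quotient metric} $\tilde{\Delta}$ on $\Qcal$, namely
\[
\tilde{\Delta}\big([U_{m}\cdots U_{1}],[V_{m}\cdots V_{1}]\big)\ \leq\ \sum_{k=1}^{m}\tilde{\Delta}\big([U_{k}],[V_{k}]\big),
\]
which is the statement of interest in a section devoted to properties of $\tilde{\Delta}$. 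The pointer to~(\ref{eq:chain}) in the theorem is shorthand for ``the analogous inequality''; the very fact that your argument renders the stated hypothesis superfluous is the signal that you have proved the wrong version.

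For $\tilde{\Delta}$ your telescoping scheme still applies, but the reduction step changes character. Right-invariance of $\tilde{\Delta}$ is inherited automatically from that of $\Delta$, so the common suffix $W_{2}=U_{k-1}\cdots U_{1}$ cancels as before. Left-invariance, however, fails in general:
\[
\tilde{\Delta}\big([W_{1}U_{k}],[W_{1}V_{k}]\big)=\min_{\Phi}\Delta\big(U_{k},\,W_{1}^{-1}(\Is\otimes\Phi)W_{1}\,V_{k}\big),
\]
and this equals $\tilde{\Delta}([U_{k}],[V_{k}])$ only if conjugation by $W_{1}=V_{m}\cdots V_{k+1}$ maps $\Is\otimes\Ue$ onto itself, i.e.\ only if $W_{1}$ lies in the normalizer $\Ncal_{\Uc}[\Is\otimes\Ue]=\Us\otimes\Ue$ (Lemma~1). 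Since $\Us\otimes\Ue$ is a group, the hypothesis $V_{i}\in\Us\otimes\Ue$ guarantees this for every $k$. The paper's proof handles $m=2$ directly from the definition of $\tilde{\Delta}$ and invokes the normalizer at exactly the same point; your hybrid-product telescoping is a tidier path to general $m$, once the role of the hypothesis is identified.
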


\begin{proof}
See \ref{app:chain}.
\end{proof}

From this result, the chaining property holds for $V_{i} \in \Us \otimes
\Ue$, i.e., the set of all decoupled or factorizable unitary operators
in $\Uc$. This condition is satisfied by any $V_{i}$ that preserves the
isolation of the system $\Scal$ from the environment $\Ecal$, a
ubiquitous objective in QC.

\subsection{Convexity}
\label{ssec:convexity}

Performing the minimization in (\ref{eq:quot-metric}) over $\Phi_{1},
\Phi_{2} \in \Ue$ to compute $\tilde{\Delta} ([U], [V])$ may be
difficult and expensive in general. However, when the distance measure
is determined by a left-invariant norm (\ref{eq:quot-metric-left}), the
minimization can be recast as a convex optimization problem
\cite{Boyd04a}. Relaxing the condition on $\Phi$ from $\Phi^{\dag} \Phi
= \Ie$ to $\Phi^{\dag} \Phi \leq \Ie$ produces such a problem since (a)
any norm is a convex function, (b) the argument in the norm is affine in
the optimization variable $\Phi$, and (c) the set of matrices that
satisfies $\Phi^{\dag} \Phi \leq \Ie$ is a convex set in $\Phi$. To show
this, note that the set of matrices that satisfies $\Phi^{\dag} \Phi
\leq \Ie$ is equal to the set that satisfies $\| \Phi \|_{2} \leq 1$,
where $\| \cdot \|_{2}$ is the induced two-norm described in
section~\ref{sec:singular}. Thus, $\| \theta \Phi_{1} + (1 - \theta)
\Phi_{2} \|_{2} \leq \theta \| \Phi_{1} \|_{2} + (1 - \theta) \|
\Phi_{2} \|_{2} \leq 1$ for all $0 \leq \theta \leq 1$, which follows
from the triangle inequality. Expressing (\ref{eq:quot-metric-left})
[or (\ref{eq:norm-dist})] as a convex optimization problem is desirable,
provided the solution is on the boundary of $\Phi^{\dag} \Phi \leq \Ie$
(i.e., $\Phi \in \Ue$), because if a local minimum exists in a convex
set, it is also a global minimum \cite{Boyd04a}.

\subsection{Stability}
\label{ssec:stability}

Another important metric property is \emph{stability}
\cite{Gilchrist05a}. A metric is said to be \emph{stable} if
\begin{equation}
\label{eq:stable}
\Delta([U] \otimes \Xi, [V] \otimes \Xi) = \Delta([U], [V]),
\end{equation}
for any unitary operator $\Xi$ of any given dimension. Stability means
that operations on an ancillary system that is not coupled to the
original composite system do not affect the value of $\Delta$, and
hence $\tilde{\Delta}$ \cite{Gilchrist05a}. We do not prove stability
for the general form of the distance measure in (\ref{eq:quot-metric})
or (\ref{eq:norm-dist}), but demonstrate it for the Hilbert-Schmidt
norm distance in the following section.

\section{Computing the distance measure with the
Hilbert-Schmidt norm}
\label{sec:HS}

In this section, the distance measure presented in
section~\ref{sec:quot-metric} is evaluated with the Hilbert-Schmidt
norm (also referred to as the Frobenius norm), which is defined as
\cite{Horn90a}
\begin{equation}
\label{eq:HS-norm}
\| M \|_{\rHS} := \sqrt{\Tr \left( M^{\dag} M \right)} = \left[ \sum_{i
 = 1}^{n} \sigma_{i}^{2}(M) \right]^{1/2}, \ \ \forall M \in \Mcal_{n}
\left( \Cbb \right),
\end{equation}
where $\sigma_{i}(M)$ is the $i$th singular value of $M$. The $i$th
singular value, $\sigma_{i}(M)$, is the square root of the $i$th
eigenvalue of the positive square matrix $M^{\dag} M$, where the
singular values appear in descending order: $\sigma_{i}(M) \geq
\sigma_{i+1}(M)$ for all $i$.

The general form of the Hilbert-Schmidt norm distance is, according to
(\ref{eq:norm-dist}), 
\begin{equation}
\label{eq:HS-dist-a}
\tilde{\Delta}_{\rHS} ([U], [V]) := \lambda_{n} \min_{\Phi}
\Big\{ \| U - (\Is \otimes \Phi) V \|_{\rHS} \ : \ \Phi \in
\Ue \Big\}.
\end{equation}
Using (\ref{eq:HS-norm}) in (\ref{eq:HS-dist-a}) and $\lambda_n =
(2n)^{-1/2}$, the minimization over $\Phi$ can be carried out
analytically and, as shown below, we obtain:
\begin{equation}
\label{eq:HS-dist-b} 
\tilde{\Delta}_{\rHS} ([U], [V]) = \left( 1 - \frac{1}{n} 
\| \Gamma \|_{\Tr} \right)^{1/2},
\end{equation}
where $\Gamma \in \Mcal_{n_{\rme}}(\Cbb)$ is
\begin{equation}
\label{eq:gamma} 
\Gamma := \Trs \left( U V^{\dag} \right),
\end{equation}
and $\| \cdot \|_{\Tr}$ in (\ref{eq:HS-dist-b}) is the trace norm
\cite{Horn90a}:
\begin{equation}
\label{eq:trace-norm}
\| M \|_{\Tr} := \Tr \left( \sqrt{M^{\dag} M} \right) =
\sum_{i = 1}^{n} \sigma_{i}(M), \ \ \forall M \in \Mcal_{n} (\Cbb).
\end{equation}
Note that in (\ref{eq:gamma}), $\Trs( \cdot )$ denotes the partial trace
over the system space $\Hs$.

\begin{proof}
The derivation of (\ref{eq:HS-dist-b}) begins with some algebraic
rearrangement of (\ref{eq:HS-dist-a}), initially yielding
\begin{subequations}
\begin{align}
\tilde{\Delta}_{\rHS} ([U], [V]) & = \min_{\Phi} \sqrt{1 -
 \frac{1}{n} \Real \bigg\{ \Tr \Big[ U V^{\dag} \left(
 \Is \otimes \Phi^{\dag} \right) \Big] \bigg\}}, \\
& = \min_{\Phi} \sqrt{1 - \frac{1}{n} \Real \bigg\{ \Tre \Big[
 \Phi^{\dag} \times \Trs \left( U V^{\dag} \right) \Big] \bigg\}}, \\
& = \min_{\Phi} \sqrt{1 - \frac{1}{n} \Real \Big[ \Tr \left( \Gamma
 \Phi^{\dag} \right) \Big]}.
\end{align}
\end{subequations}

Computing the singular value decomposition (SVD) of $\Gamma$ yields
$\Gamma = W \Sigma X^{\dag}$, where $W, X \in \Ue$ and $\Sigma =
\mathrm{diag} \left( \sigma_{1}, \ldots, \sigma_{n_{\rme}} \right) \in
\Mcal_{n_{\rme}} \left( \Rbb \right)$, with $\sigma_{i} \geq
\sigma_{i+1} \geq 0$, for all $i$. Thus,
\begin{equation}
\Tr \left( \Gamma \Phi^{\dag} \right) = \Tr \left( W \Sigma
 X^{\dag} \Phi^{\dag} \right) = \Tr \left( \Sigma X^{\dag}
 \Phi^{\dag} W \right) = \Tr \left( \Sigma Y \right),
\end{equation}
where $Y = X^{\dag} \Phi^{\dag} W \in \Ue$. Hence,
\begin{equation}
\Real \left[ \Tr \left( \Gamma \Phi^{\dag} \right) \right]
= \Real \left[ \Tr \left( \Sigma Y \right) \right] = \sum_{i
= 1}^{n_{\rme}} \sigma_{i} \Real \left( Y_{ii} \right),
\end{equation}
which, because $Y$ is unitary, achieves the maximum of $\Tr(\Sigma)$
when $Y = \Ie$. For a unitary matrix $Y$, as a result of orthonormality,
$\sum_{j} |y_{ij}|^{2} = 1$ for all $i$. Hence, $0 \leq |y_{ij}| \leq
1$ for all $i, j$, and in particular $0 \leq |y_{ii}| \leq 1$ for all
$i$. Since $\Sigma$ is diagonal and positive (with matrix elements
$\sigma_{i}$), $\Real \left[ \Tr \left( \Sigma Y \right) \right]
= \sum_{i} \sigma_{i} \Real \left( y_{ii} \right) \leq \sum_{i}
\sigma_{i} |y_{ii}| \leq \sum_{i} \sigma_{i} = \Tr \left( \Sigma
\right)$. Thus, the maximum of $\Real \left[ \Tr \left( \Sigma Y
\right) \right]$ is attained when $Y = \Ie$.

To show that $\Tr(\Sigma) = \| \Gamma \|_{\Tr}$, consider $\Sigma =
W^{\dag} \Gamma X = X^{\dag} \Gamma^{\dag} W$, which implies that
$\Sigma^{2} = X^{\dag} \Gamma^{\dag} \Gamma X$. Because $\Sigma$ is
diagonal, the matrix $\Sigma^{2}$ is also diagonal, and $X$ is the
unitary matrix which diagonalizes the positive matrix $\Gamma^{\dag}
\Gamma$. Therefore, it follows that $\sqrt{\Sigma^{2}} = \Sigma =
X^{\dag} \sqrt{\Gamma^{\dag} \Gamma} X$. Thus,
\begin{equation}
\max_{\Phi} \left\{ \Real \left[ \Tr \left( \Gamma
 \Phi^{\dag} \right) \right] \right\} = \Tr \left( \Sigma \right) =
 \| \Gamma \|_{\Tr},
\end{equation}
and (\ref{eq:HS-dist-b}) results.
\end{proof}

From this proof it follows that
\begin{equation}
\label{eq:convex}
\min_{\Phi} \Big\{ \| U - (\Is \otimes \Phi) V \|_{\rHS} : \Phi^{\dag}
\Phi = \Ie \Big\} = \min_{\Phi} \Big\{ \| U - (\Is \otimes \Phi) V
\|_{\rHS} : \Phi ^{\dag} \Phi \leq \Ie \Big\},
\end{equation}
and the minimum occurs on the boundary of the set $\Phi^{\dag} \Phi \leq
\Ie$, i.e., $\Phi = WX^{\dag} \in \Ue$, where $X$ and $W$ are the
unitary matrices from the SVD of $\Gamma$.

For the case where the target unitary transformation $\Vsys \in \Us$
is specified for a closed system, results of
section~\ref{ssec:closedsystem-target} can be used to obtain:
\begin{equation}
\label{eq:HS-dist-c} 
\tilde{\Delta}_{\rHS} ([U], [\Vsys \otimes \Ie]) = 
\left( 1 - \frac{1}{n} \| \Gamma \|_{\Tr} \right)^{1/2},
\hspace{0.5cm}
\Gamma = \Trs \left[ U (\Vsys^{\dag} \otimes \Ie ) \right].
\end{equation}

\subsection{Stability of the Hilbert-Schmidt norm}
\label{ssec:HS-stable}

Consider the stability \cite{Gilchrist05a} of the Hilbert-Schmidt norm,
as expressed in (\ref{eq:stable}):
\begin{equation}
\tilde{\Delta}_{\rHS} \big( [U] \otimes \Xi, [V] \otimes \Xi \big) =
\lambda_{n n_{\xi}} \min_{\Phi} \Big\{ \| [U - (\Is \otimes \Phi) V]
\otimes \Xi \|_{\rHS} : \Phi \in \Ue \Big\},
\end{equation}
where $\Xi$ is any unitary operator and $n_{\xi} = \dim \{ \Xi \}$. For
this norm, a tensor product of operators factors into the corresponding
product of the norms of those operators:
\begin{subequations}
\begin{align}
\tilde{\Delta}_{\rHS} \big( [U] \otimes \Xi, [V] \otimes \Xi \big)) & =
\lambda_{n} \min_{\Phi} \Big\{ \| [U - (\Is \otimes \Phi) V] \|_{\rHS}
\Big\} \times \| \Xi \|_{\rHS} / \sqrt{n_{\xi}}, \\
& = \lambda_{n} \min_{\Phi} \Big\{ \| [U - (\Is \otimes \Phi) V]
\|_{\rHS} \Big\}, \\
& = \tilde{\Delta}_{\rHS} ([U], [V]).
\end{align}
\end{subequations}
Therefore, with the appropriate normalization factor,
$\tilde{\Delta}_{\rHS}$ is stable. 

\subsection{Unitary operators that are exact tensor products}
\label{ssec:exact-tensor}

Suppose that $U$ and $V$, the actual and target operators, respectively,
are exact tensor products, i.e.,
\begin{equation}
\label{eq:tensor-product}
U = \Usys \otimes \Uenv \ \ \textrm{and} \ \ V = \Vsys \otimes \Venv,
\end{equation}
where $\Usys, \Vsys \in \Us$ and $\Uenv, \Venv \in \Ue$. Such a
situation would occur if, e.g., there is no system-environment
interaction ($H_{\rmint}= 0$) or the system and environment are decoupled
at a final time $t = t_{\mathrm{f}}$. Then, from (\ref{eq:gamma}),
\begin{equation}
\label{eq:gamma-dp}
\Gamma = \Tr \left( \Usys \Vsys^{\dag} \right) \Uenv \Venv^{\dag},
\end{equation}
and
\begin{equation}
\label{eq:gamma-dp-trace}
\| \Gamma \|_{\Tr} = \left| \Tr \left(
\Usys \Vsys^{\dag} \right) \right| \Tr \left( \sqrt{
\Venv \Uenv^{\dag} \Uenv \Venv^{\dag}} \right) = n_{\rme} 
\left| \Tr \left( \Usys \Vsys^{\dag} \right) \right|.
\end{equation}
The distance measure now becomes
\begin{equation}
\label{eq:dist-tensor}
\tilde{\Delta}_{\rHS} \big( [\Usys \otimes \Uenv], [\Vsys \otimes \Venv]
\big) = \left[ 1 - \frac{1}{n_{\rms}} \left| \Tr \left(
 \Usys \Vsys^{\dag} \right) \right| \right]^{1/2},
\end{equation}
which is, as one would expect, completely independent of $\Uenv$ and
$\Venv$. Indeed, as discussed in section~\ref{ssec:closedsystem-target},
$\Uenv$ and $\Venv$ can be incorporated into $\Phi$ in the quotient
metric in (\ref{eq:norm-dist}) and thus simply replaced by
$\Ie$. Therefore, the matrix $\Gamma$ in (\ref{eq:gamma-dp}) can be
replaced by $\Gamma = \Tr \left( \Usys \Vsys^{\dag} \right) \Ie$
[immediately leading to (\ref{eq:gamma-dp-trace})], and the distance
measure in (\ref{eq:dist-tensor}) can be expressed as
$\tilde{\Delta}_{\rHS} \big( [\Usys \otimes \Ie], [\Vsys \otimes \Ie]
\big)$.

The distance measure in (\ref{eq:dist-tensor}) is related to other
distance and fidelity measures used in previous works for optimal
control of unitary operations \cite{Palao02a, Palao03a, Sporl07a} and
analysis of the unitary control landscape \cite{Rabitz05a, Hsieh08a,
 Ho09a} in closed quantum systems. For control landscapes, the
analysis of critical points of $\tilde{\Delta}_{\rHS} \big( [\Usys
 \otimes \Ie], [\Vsys \otimes \Ie] \big)$ with respect to the control
field is especially important. 

Note that the distance in (\ref{eq:dist-tensor}) is zero if and only if
$\Usys$ and $\Vsys$ differ only by a global phase, i.e., $\Usys =
\exp(\rmi\phi) \Vsys$ for any real $\phi$. Therefore, this distance
can be considered as a phase-independent generalization of 
\begin{equation}
\label{eq:dist-standard-real}
\frac{1}{\sqrt{2n_{\rms}}}\| \Usys - \Vsys
 \|_{\rHS} = \left\{ 1 - \frac{1}{n_{\rms}} \Real
 \Big[ \Tr \left( \Usys \Vsys^{\dag}
 \right) \Big] \right\}^{1/2},
\end{equation}
a frequently used distance \cite{Ho09a} that is zero if and only if
$\Usys = \Vsys$.

\section{Computing the distance measure with the
maximum singular value}
\label{sec:singular}

Suppose a (possibly entangled) normalized pure state $|\psi\rangle$ of
the composite system $\Ccal$ is acted upon by a unitary operator $U$.
The output of this action, $U |\psi\rangle$, can be compared to the
output of the target action, $V |\psi\rangle$ (modulo $\Is \otimes
\Phi$). The output state error $\varepsilon$ is
\begin{equation}
\varepsilon := \left[ U - (\Is \otimes \Phi) V \right] |\psi\rangle.
\end{equation}
The maximum (over all normalized pure states) norm of the error is
\begin{equation}
\label{eq:2-norm}
\max_{|\psi\rangle} \varepsilon^{\dag} \varepsilon =
\| U - (\Is \otimes \Phi) V \|_{2}^{2},
\end{equation}
where $\| \cdot \|_{2}$ is the induced two-norm of a matrix
\cite{Horn90a}, i.e., the maximum singular value of the matrix
argument.\footnote{Equation~(\ref{eq:2-norm}) may also be expressed in
terms of the operator norm \cite{Horn90a}: $\displaystyle{\| U - (\Is
\otimes \Phi) V \|_{\infty} := \sup_{|\psi\rangle} \| \left[ U -
 (\Is \otimes \Phi) V \right] |\psi\rangle \|_{2}}$, where $\langle\psi|
\psi\rangle = 1$. In this context, $\| \cdot \|_{2}$ is the standard
vector two-norm.} Now consider the distance measure based on the induced
two-norm:
\begin{equation}
\label{eq:2-norm-dist}
\tilde{\Delta}_{2} ([U], [V]) = \lambda_{2} \min_{\Phi} \Big\{
\| U - (\Is \otimes \Phi) V \|_{2} \ \big| \ \Phi \in \Ue \Big\},
\end{equation}
where $\lambda_{2} = 2^{-1}$. Although an explicit solution for
$\tilde{\Delta}_{2} ([U], [V])$ is not presented in this work, we
establish the following bounds:
\begin{equation}
\label{eq:2-norm-bnds}
\lambda_{2} \| U - (\Is \otimes \underline{\Phi}) V \|_{2} \leq
\tilde{\Delta}_{2} ([U], [V]) \leq \lambda_{2} \| U - (\Is \otimes
\overline{\Phi}) V \|_{2},
\end{equation}
where $\underline{\Phi}$ is the solution to the following optimization
problem:
\begin{equation}
\label{eq:phi-over}
\underline{\Phi} := \arg \min_{\Phi} \Big\{ \| U - (\Is \otimes \Phi)
V\|_{2} : \Phi^{\dag} \Phi \leq \Ie \Big\}.
\end{equation}
Here, $\Phi^{\dag} \Phi \leq \Ie$ indicates that $\Phi^{\dag} \Phi$ is
a positive Hermitian matrix whose eigenvalues are at most 1. The
optimization in (\ref{eq:phi-over}) is convex for the same reasons as
those discussed in section~\ref{ssec:convexity}. However, there is no
guarantee that the resulting optimizer $\underline{\Phi}$ is
unitary. Since $\Phi$ in (\ref{eq:phi-over}) is less constrained than
in (\ref{eq:2-norm-dist}), it follows that the lower bound in
(\ref{eq:2-norm-bnds}) applies. The operator $\overline{\Phi}$ is obtained
from the SVD of $\underline{\Phi}$:
\begin{equation}
\label{eq:phi-hat}
\underline{\Phi} = WSX^{\dag} \ \to \ \overline{\Phi} = WX^{\dag}.
\end{equation}
Thus, $\overline{\Phi}$ is a unitary approximation to
$\underline{\Phi}$, and since it is not necessarily the optimal solution
to (\ref{eq:2-norm-dist}), the upper bound in (\ref{eq:2-norm-bnds})
follows. The lower and upper bounds in (\ref{eq:2-norm-bnds}) will be
close to each other if the singular values of $\underline{\Phi}$ are
close to unity, i.e., if $\underline{\Phi}$ is close to a unitary matrix.

For any matrix $M \in \Mcal_{n} (\Cbb)$, the induced
two-norm is bounded from above by the Hilbert-Schmidt norm
\cite{Horn90a}: $\| M \|_{2} \leq \| M \|_{\rHS}$. Comparing the
distance measure with these norms yields a similar relationship. Given
$U, V \in \Uc$, let $\Phi_{\rHS}$ be minimizer of $\tilde{\Delta}_{\rHS}
([U], [V])$. It follows that
\begin{align}
\lambda_{2}^{-1} \tilde{\Delta}_{2} ([U], [V]) & \leq \| U - \left( \Is
 \otimes \Phi_{\rHS} \right) V \|_{2} \nonumber \\
& \leq \|U - \left( \Is \otimes \Phi_{\rHS} \right) V \|_{\rHS} =
\lambda_{n}^{-1} \tilde{\Delta}_{\rHS} ([U], [V]),
\end{align}
i.e., $\tilde{\Delta}_{2} ([U], [V])$ is bounded from above by
$(n/2)^{1/2} \tilde{\Delta}_{\rHS} ([U], [V])$.

\section{Relating the distance measure to quantum channel
fidelity}
\label{sec:fidelity}

Quantum operations, for example, a free or controlled time evolution
(possibly in the presence of decoherence) or a transmission of
information (possibly via a noisy connection), are generally described
by quantum channels \cite{Nielsen00a, Reimpell05a}. In this section, we
consider a common situation in QC, where the target quantum channel is a
unitary transformation specified for a closed system (i.e., for
$\Scal$ alone). We denote this target unitary transformation as $\Vsys
\in \Us$ (cf.~sections~\ref{ssec:closedsystem-target} and
\ref{ssec:exact-tensor}), and the Hilbert-Schmidt norm distance
$\tilde{\Delta}_{\rHS} ([U], [\Vsys \otimes \Ie])$ corresponding to the
closed-system target is given by (\ref{eq:HS-dist-c}). We demonstrate
that the distance measure $\tilde{\Delta}_{\rHS} ([U], [\Vsys \otimes
\Ie])$ and a typical measure of \emph{quantum channel fidelity} both
depend on the matrix norm of $\Gamma = \Trs \left[ U (\Vsys^{\dag}
  \otimes \Ie ) \right]$ from (\ref{eq:HS-dist-c}).

There are several ways to define the fidelity of a quantum channel
with respect to the target unitary transformation \cite{Nielsen00a,
Uhlmann76a, Jozsa94a, Schumach96a, Fuchs99a, Kretsch04a, Gilchrist05a,
Lidar08a}. For example, the Uhlmann state fidelity, defined as
\cite{Uhlmann76a, Jozsa94a}
\begin{equation}
\label{eq:Uhlmann1}
\Fcal_{\rmu}(\rho_{1}, \rho_{2}) := \| \sqrt{\rho_{1}} \sqrt{\rho_{2}}
\|_{\Tr}^{2} = \left\{ \Tr \big[ \left( \sqrt{\rho_{1}} \rho_{2}
\sqrt{\rho_{1}} \right)^{1/2} \big] \right\}^{2} , 
\end{equation} 
where $\rho_{1}, \rho_{2} \in \Dcal(\Hcal)$, can be used to measure the
effect of quantum channels on states. The Kolmogorov distance between
two density matrices is \cite{Fuchs99a}
\begin{equation}
\label{eq:Kolmogorov}
D_{\rmk}(\rho_{1}, \rho_{2}) := \frac{1}{2} \| \rho_{1} - \rho_{2}
\|_{\Tr},
\end{equation}
which bounds the Uhlmann state fidelity \cite{Fuchs99a}:
\begin{equation}
\label{eq:dist-fidelity1}
\left[ 1 - D_{\rmk}(\rho_{1}, \rho_{2}) \right]^{2} \leq
\Fcal_{\rmu}(\rho_{1}, \rho_{2}) \leq 1 - D_{\rmk}^{2}(\rho_{1},
\rho_{2}).
\end{equation}
In this section, we develop a direct measure of fidelity for quantum
channels based on $\Fcal_{\rmu}$ in (\ref{eq:Uhlmann1}), and establish
bounds analogous to those in (\ref{eq:dist-fidelity1}) with the distance
measure $\tilde{\Delta}_{\rHS}$.

Let $\Kcal$ denote a completely positive and trace-preserving quantum
channel \cite{Nielsen00a}, mapping the state $\rho_{\rms}$ to
$\tilde{\rho}_{\rms}$, where $\rho_{\rms}, \, \tilde{\rho}_{\rms} \in
\Dcal (\Hs)$, with the following operator-sum representation (also known
as the Kraus map \cite{Kraus83a}):
\begin{subequations}
\label{eq:OSR}
\begin{gather}
\tilde{\rho}_{\rms} = \Kcal[ \rho_{\rms} ] := 
\sum_{i} K_{i} \rho_{\rms} K_{i}^{\dag}, \\
\intertext{such that}
\sum_{i} K_{i}^{\dag} K_{i} = \Is, \ \ \textrm{where} \ K_{i} \in
\Mcal_{n_{\rms}} (\Cbb).
\end{gather}
\end{subequations}
The target quantum channel is the unitary transformation $V_{\rms}$ that
maps $\rho_{\rms}$ to $V_{\rms}\rho_{\rms}V_{\rms}^{\dag}$.

We can compare the actual and target quantum channels, $\Kcal$ and
$V_{\rms}$, respectively, using the Uhlmann fidelity of
(\ref{eq:Uhlmann1}):
\begin{equation}
\label{eq:Uhlmann2}
\Fcal_{\rmu} \left( V_{\rms} \rho_{\rms} V_{\rms}^{\dag},
\tilde{\rho}_{\rms} \right) = \left\{ \Tr \left[ \left( V_{\rms}
\sqrt{\rho_{\rms}} V_{\rms}^{\dag} \tilde{\rho}_{\rms} V_{\rms}
\sqrt{\rho_{\rms}} V_{\rms}^{\dag} \right)^{1/2} \right] \right\}^{2} .
\end{equation}
When the input state $\rho_{\rms}$ is pure, it can be expressed as
$\rho_{\rms} = |\psi_{\rms}\rangle \langle\psi_{\rms}|$; then
(\ref{eq:Uhlmann2}) yields
\begin{equation}
\label{eq:Uhlmann3}
\Fcal_{\rmu} \left(
V_{\rms} |\psi_{\rms}\rangle \langle\psi_{\rms}| V_{\rms}^{\dag},
\tilde{\rho}_{\rms} \right) 
= \langle\psi_{\rms}| V_{\rms}^{\dag} \tilde{\rho}_{\rms} 
V_{\rms} |\psi_{\rms}\rangle
= \sum_{i} \left|
\langle \psi_{\rms} | \Vsys^{\dag} K_{i} | \psi_{\rms} \rangle
\right|^{2} .
\end{equation}
To evaluate the proximity of the quantum channel $\Kcal$ to the
target transformation $\Vsys$, it is convenient to use a measure that is
independent of the input state. One such measure is the \emph{minimum
pure-state fidelity} $\Fcal_{\mathrm{p}}(\Kcal, \Vsys)$
\cite{Nielsen00a}, which is obtained by minimizing the form
(\ref{eq:Uhlmann3}) of the Uhlmann fidelity over the set of all pure
input states:
\begin{equation}
\label{eq:fidelity-MPS}
\Fcal_{\mathrm{p}}(\Kcal, \Vsys) :=
\min_{|\psi_{\rms}\rangle} 
\Fcal_{\rmu} \left(
V_{\rms} |\psi_{\rms}\rangle \langle\psi_{\rms}| V_{\rms}^{\dag},
\tilde{\rho}_{\rms} \right) 
= \min_{|\psi_{\rms}\rangle} \sum_{i} \left|
\langle \psi_{\rms} | \Vsys^{\dag} K_{i} | \psi_{\rms} \rangle \right|^{2}.
\end{equation}
Calculating $\Fcal_{\mathrm{p}}(\Kcal, \Vsys)$ is not easy
in general because the minimization in (\ref{eq:fidelity-MPS}) is not
convex. However, this problem can be related to a convex optimization,
producing the following lower bound:
\begin{subequations}
\label{eq:fidelity-bnd}
\begin{gather}
\underline{\Fcal}(\Kcal, \Vsys) \leq
\Fcal_{\mathrm{p}} (\Kcal, \Vsys), \\
\intertext{where}
\underline{\Fcal}(\Kcal, \Vsys) := \min_{\rho_{\rms}}
\sum_{i} \left| \Tr \left( \Vsys^{\dag} K_{i} \rho_{\rms} \right)
\right|^{2}.
\end{gather}
\end{subequations}
Both fidelities are contained in the interval $[0,1]$ and equal to one
if and only if $\Kcal \left[ \rho_{\rms} \right] = \Vsys
\rho_{\rms} \Vsys^{\dag}$ for all density matrices $\rho_{\rms} \in
\Dcal (\Hs)$, i.e., if $K_{i} = \alpha_{i} \Vsys$ for all $i$ and
$\sum_{i} \left| \alpha_{i} \right|^{2} = 1$. For a given $\Kcal$,
$\underline{\Fcal}(\Kcal, \Vsys)$ is found via a convex
optimization over all states $\rho_{\rms} \in \Dcal (\Hs)$, and
hence, it can be efficiently obtained numerically. It is shown in
\cite{Nielsen00a} that $\underline{\Fcal}(\Kcal, \Vsys)$ and
$\Fcal_{\mathrm{p}} (\Kcal, \Vsys)$ have the same pure-state
minimum, but since the set of all density matrices is convex, while the
set of all pure states is not, finding
$\underline{\Fcal}(\Kcal, \Vsys)$ is an easier optimization
problem.

If, instead of minimizing over all possible density matrices,
$\rho_{\rms}$ in $\underline{\Fcal}(\Kcal, \Vsys)$ is
simply the so-called \emph{maximally-mixed} system state $\rho_{\rmm,
 \rms} := \Is/n_{\rms}$, this results in another variant of quantum
channel fidelity, $\Fcal_{\rmm, \rms}(\Kcal, \Vsys)$ \cite{Nielsen00a,
Kretsch04a, Reimpell05a}: 
\begin{equation}
\label{eq:channel-fidelity}
\Fcal_{\rmm, \rms}(\Kcal, \Vsys) := \frac{1}{n_{\rms}^{2}}
\sum_{i} \left| \Tr \left( \Vsys^{\dag} K_{i} \right) \right|^{2}.
\end{equation}
This fidelity also evaluates the proximity of the quantum channel
$\Kcal$ to the target transformation $\Vsys$ \cite{Nielsen00a},
but without state optimization. Interestingly, $\rho_{\rmm, \rms}$ has
the unique property of being the density matrix with the ``shortest''
maximum distance, under the Hilbert-Schmidt norm, to any other density
matrix:
\begin{subequations}
\begin{gather}
\rho_{\rmm, \rms} = \arg \min_{\rho_{2}} \left( \max_{\rho_{1}} \|
\rho_{1} - \rho_{2} \|_{\rHS}^{2} \right), \ \ \forall \rho_{1},
\rho_{2} \in \Dcal(\Hs), \\ 
\intertext{where}
\min_{\rho_{2}} \left( \max_{\rho_{1}} \| \rho_{1} - \rho_{2}
\|_{\rHS}^{2} \right) = \frac{n_{\rms}-1}{n_{\rms}}.
\end{gather}
\end{subequations}
This result suggests that $\rho_{\rmm, \rms}$ is ``centrally located''
within the space of density matrices in the Hilbert-Schmidt geometry.

\begin{proof}
See \ref{app:rho-min-max}.
\end{proof}

Relating $U$ (the composite-system evolution operator) to $\Kcal$ (the
quantum channel for the system $\Scal$) requires the specification of an
initial state of the composite system $\Ccal$. For simplicity, assume
that this initial state is an uncorrelated tensor-product state:
\begin{equation}
\rho = \rho_{\rms} \otimes \rho_{\rme} = \rho_{\rms} \otimes \sum_{\nu =
 1}^{n_{\rme}} \zeta_{\nu} |\nu \rangle \langle \nu|,
\end{equation}
where, as before, $\rho \in \Dcal (\Hc)$, $\rho \geq 0$, and $\Tr \left(
  \rho \right) = 1$. The reduced dynamics of the system $\Scal$ can be
represented by the mapping in (\ref{eq:OSR}), where the Kraus operators
$K_{\nu \nu'}$ are
\begin{equation}
\label{eq:Kraus-ops}
K_{\nu \nu'} := \sqrt{\zeta_{\nu'}} \Tre \left[ (\Is \otimes
 |\nu'\rangle \langle\nu|) U \right] = \sqrt{\zeta_{\nu'}} \sum_{i,
 i' = 1}^{n_{\rms}} U_{\substack{i i' \\ \nu \nu'}} |i\rangle \langle i'|,
\end{equation}
with the unitary operator $U$ expanded as
\begin{equation}
U = \sum_{i, i' = 1}^{n_{\rms}} \sum_{\nu, \nu' = 1}^{n_{\rme}}
U_{\substack{i i' \\ \nu \nu'}} |i\rangle \langle i'| \otimes |\nu
\rangle \langle \nu'|.
\end{equation}

There exist infinitely many different sets of Kraus operators, $\{ K_{j}
\}$, that represent the same map $\Kcal$ (i.e., they evolve the system
state $\rho_{\rms}$ in exactly the same way) \cite{Kraus83a}. This is
related to the equivalence relation defined in section~\ref{sec:equiv},
namely that $U$ and $(\Is \otimes \Phi)U$ produce the same system
evolution and hence, the same Kraus map. Moreover, any Kraus map for a
$k$-level system can be represented by a set of at most $k^{2}$ Kraus
operators.

The mapping from $U$ and $\rho$ to $\Kcal$ described above illustrates
the dependence of the quantum channel $\Kcal$ (and hence, of the
corresponding fidelities) on the initial state of the environment. Note
that the exact treatment of the dynamics requires the propagation of the
evolution operator for the composite system $\Ccal$, which is also
required for computing $\tilde{\Delta} ([U], [\Vsys \otimes
\Ie])$. Although approximations may simplify the calculation of $\Kcal$,
they may eliminate certain physical processes, e.g., the Markovian
approximation limits the memory of the environment, preventing possible
coherence revivals. However, even when no approximations are made,
measures of fidelity for a quantum channel $\Kcal$ require (for the
mapping from $U$ to $\Kcal$) the specification of the environment's
initial state and thus are less general than the distance measure
$\tilde{\Delta} ([U], [\Vsys \otimes \Ie])$.

With (\ref{eq:Kraus-ops}), the fidelity $\Fcal_{\rmm,
\rms}(\Kcal, \Vsys)$ in (\ref{eq:channel-fidelity}) becomes
\begin{equation}
\Fcal_{\rmm, \rms}(\Kcal, \Vsys) = \frac{1}{n_{\rms}^{2}}
\sum_{\nu, \nu' = 1}^{n_{\rme}} \zeta_{\nu'} 
\left| \Gamma_{\nu \nu'} \right|^{2} = \frac{1}{n_{\rms}^{2}} 
\| \Gamma \sqrt{\rho_{\rme}} \|_{\rHS}^{2},
\end{equation}
where $\rho_{\rme} \in \Dcal (\He)$ is the initial state of the
environment, $\rho_{\rme} = \sum_{\nu = 1}^{n_{\rme}} \zeta_{\nu} |\nu
\rangle \langle \nu|$. Also, we see that both the distance measure
$\tilde{\Delta}_{\rHS} \big( [U], [\Vsys \otimes \Ie] \big)$ and
quantum channel fidelity $\Fcal_{\rmm, \rms}(\Kcal, \Vsys)$ depend on
the matrix $\Gamma$ defined in (\ref{eq:HS-dist-c}). Specifically,
increasing the norm of $\Gamma$ will increase fidelity and decrease
distance. As a concluding example, suppose that $\rho_{\rme}$ is the
maximally mixed environment state, i.e., $\rho_{\rmm, \rme} :=
\Ie/n_{\rme}$. In this case, the quantum channel fidelity $\Fcal_{\rmm,
\rms}(\Kcal, \Vsys)$ becomes
\begin{equation}
\label{eq:fidelity-mixed}
\Fcal_{\mathrm{m,c}}(\Kcal, \Vsys) := \frac{1}{n_{\rms} n}
\sum_{\nu, \nu' = 1}^{n_{\rme}} \left|
 \Gamma_{\nu \nu'} \right|^{2} = \frac{1}{n_{\rms} n} \| \Gamma
\|_{\rHS}^{2},
\end{equation}
further emphasizing the dependence of fidelity on the norm of $\Gamma$.

We show that the inequality relation between the distance and fidelity
of quantum states presented in (\ref{eq:dist-fidelity1}) \cite{Fuchs99a}
also applies to the distance and fidelity of quantum operations given by
$\tilde{\Delta}_{\rHS} ([U], [\Vsys \otimes \Ie])$ and
$\Fcal_{\mathrm{m,c}}(\Kcal, \Vsys)$, respectively. Specifically, we
obtain the following lower and upper bounds of
$\Fcal_{\mathrm{m,c}}(\Kcal, \Vsys)$:
\begin{equation}
\label{eq:dist-fidelity2}
\left[ 1 - \tilde{\Delta}_{\rHS} ([U], [\Vsys \otimes \Ie]) \right]^{2}
\leq \Fcal_{\mathrm{m,c}}(\Kcal, \Vsys) \leq 1 -
\tilde{\Delta}^{2}_{\rHS} ([U], [\Vsys \otimes \Ie]).
\end{equation}

\begin{proof}
See \ref{app:fidelity}.
\end{proof}

\section{Applications: Optimal control of quantum gates}
\label{sec:control}

\subsection{Model open system}

In this section, we illustrate the use of the distance measure
$\tilde{\Delta}_{\rHS} ([U], [\Vsys \otimes \Ie])$ in the optimal
control of one- and two-qubit gates relevant for QC. Specifically, this
measure is used in numerical calculations to evaluate the distance
between quantum operations generated in the framework of optimal control
theory \cite{Rabitz00b, Balint08a} for a system coupled to a decohering
environment \cite{Grace07a, Grace07b}, and a target unitary
transformation. We use a model of interacting two-level particles, which
are divided into a QIP, composed of $q$ qubits, and an environment,
composed of $e$ two-level particles. The qubits are directly coupled to
a time-dependent external control field, while the environment is not
directly controlled and is thereby managed only through its interaction
with the qubits. The Hamiltonian for the composite system $\Ccal$,
abbreviated as $H = H_{0} + H_{c(t)} + H_{\rmint}$,\footnote{Compare
this expression to the abbreviated Hamiltonian in
(\ref{eq:Hamiltonian1}): $H_{0} + H_{c(t)} = H_{\rms} + H_{\rme}$.} has
the explicit form
\begin{equation}
\label{eq:Hamiltonian2}
H = \sum_{i = 1}^{q + e} \omega_{i} S_{iz} 
- \sum_{i = 1}^{q} \mu_{i} C(t) S_{ix}
- \sum_{j = i+1}^{q + e} \sum_{i = 1}^{q + e - 1} \gamma_{ij}
\mathbf{S}_{i} \cdot \mathbf{S}_{j}.
\end{equation}
Here, $\mathbf{S}_{i} = \left( S_{ix}, S_{iy}, S_{iz} \right)$ is the
spin operator for the $i$th particle ($\mathbf{S}_{i} = \frac{1}{2}
\bm{\upsigma}_{i}$, in terms of the Pauli matrices), $H_{0}$ is the
sum over the free Hamiltonians $\omega_{i} S_{iz}$ for all $q + e$
particles ($\omega_{i}$ is the transition angular frequency for the
$i$th particle), $H_{c(t)}$ specifies the coupling between the $q$
qubits and the time-dependent control field $C(t)$ ($\mu_{i}$ are the
corresponding dipole moments), and $H_{\rmint}$ represents the
Heisenberg exchange interaction between the particles ($\gamma_{ij}$
is the coupling parameter for the $i$th and $j$th
particles). Hamiltonians of the form in (\ref{eq:Hamiltonian2}) are
often referred to as \emph{effective} or \emph{spin} Hamiltonians
\cite{Tsaparlis77a}.

The evolution of the composite system $\Ccal$ is calculated in
an exact quantum-mechanical manner, by propagating the Schr\"{o}dinger
equation (\ref{eq:Schrodinger}) for the Hamiltonian in
(\ref{eq:Hamiltonian2}), without either approximating the dynamics by
a master equation or using a perturbative analysis based on a weak
coupling assumption. This calculation produces the final-time
evolution operator $U_{\tf} \in \Uc$ (an $n \times n$ unitary matrix),
which is compared to the target gate operation $\Vsys \in \Us$ (an
$n_{\rms} \times n_{\rms}$ unitary matrix) via $\tilde{\Delta}_{\rHS}
\big( [U_{\tf}], [\Vsys \otimes \Ie] \big)$. Note that $n_{\rms} =
2^{q}$, $n_{\rme} = 2^{e}$, and $n = 2^{(q + e)}$. 

For our examples, we consider two different composite systems. In the
first example, one qubit is coupled to a two-particle environment ($q
= 1$ and $e = 2$), which can be modeled by a linear chain of particles
with the qubit $q_{1}$ at the center, equally coupled to both
environment particles $e_{2}$ and $e_{3}$:
\begin{equation}
e_{2} \stackrel{\gamma_{12}}{\longleftrightarrow} q_{1}
\stackrel{\gamma_{13}}{\longleftrightarrow} e_{3} \ ,
\end{equation}
where $\gamma_{12} = \gamma_{13} = \gamma$. Assuming only
nearest-neighbor coupling in this configuration, we set $\gamma_{23} =
0$. In the second example, two qubits are equally coupled to a
one-particle environment ($q = 2$ and $e = 1$). This is modeled as a
triangular cluster: 
\begin{equation}
\begin{array}{ccc}
& e_{3} & \\ 
\stackrel{\gamma_{13}}{} \swarrow \! \! \! \! \! \! \nearrow & &
\nwarrow \! \! \! \! \! \! \searrow \stackrel{\gamma_{23}}{} \\ 
q_{1} \ \ & \stackrel{\gamma_{12}}{\longleftrightarrow} & \ \ q_{2}
\end{array} \ ,
\end{equation}
where the two qubits are denoted as $q_{1}$ and $q_{2}$, the
environment particle as $e_{3}$, and $\gamma_{13} = \gamma_{23} =
\gamma$. Other system-environment configurations and corresponding
optimal control results, using $\tilde{\Delta}_{\rHS} \big( [U_{\tf}],
[\Vsys \otimes \Ie] \big)$, are presented in \cite{Grace07a, Grace07b}. 

Parameters of the composite system are selected to ensure complex
dynamics and strong decoherence: values of $\gamma/\omega$ are up to
0.02 and frequencies $\omega_{i}$ are close, but not equal, to enhance
the system-environment interaction (see \cite{Grace07a} for more
details). In addition to setting $\hbar = 1$, we introduce a natural
system of units by setting $\mu_{i} = 1$ for all $i$ and the qubit
frequency $\omega_{1} = 1$, implying that one period of free evolution
is $2 \pi$.\footnote{For one-qubit coupled to a two-particle
 environment, the frequencies of the environment particles are:
 $\omega_{2} \approx 0.99841$, $\omega_{3} \approx 1.00159$. For two
 qubits coupled to a one-particle environment, the frequencies of
 $q_{2}$ and $e_{3}$ are: $\omega_{2} \approx 1.09159$ and $\omega_{3}
 \approx 0.99841$, respectively.}

\subsection{Optimal control algorithm}
\label{ssec:algorithm}

The ultimate control goal is to decouple the system $\Scal$ from
the environment $\Ecal$ at a time $t = t_{\mathrm{f}}$ and
simultaneously produce the target unitary operation $\Vsys \in \Us$
for $\Scal$. Target quantum gates for the one- and two-qubit
systems are the Hadamard gate $H_{\mathrm{g}}$ and controlled-not gate
CNOT, respectively (both of these gates are elements of a universal set
of logical operations for QC \cite{Nielsen00a}):
\begin{equation}
H_{\mathrm{g}} := \frac{1}{\sqrt{2}} \left( \begin{array}{cr} 
1 & 1 \\ 
1 & -1 
\end{array} \right) \ \ \textrm{and} \ \
\mathrm{CNOT} := \left( \begin{array}{cccc} 
1 & 0 & 0 & 0 \\ 
0 & 1 & 0 & 0 \\ 
0 & 0 & 0 & 1 \\ 
0 & 0 & 1 & 0 
\end{array} \right). 
\end{equation}

A hybrid optimization method incorporating genetic and gradient
algorithms \cite{Blum08a} is employed to minimize the distance measure
$\tilde{\Delta}_{\rHS} \big( [U_{\tf}], [\Vsys \otimes \Ie] \big)$
with respect to the control field $C(t)$. After an initial
optimization with a genetic algorithm, we remove the constraints
imposed by the parameterized form of the control field to provide the
potential for more flexible and effective control. An optimal control
field is then found by minimizing the objective functional $\Jsys$,
given by
\begin{equation}
\label{eq:functional}
\Jsys(C) := \tilde{\Delta}_{\rHS} \big( [U_{\tf} (C)], 
[\Vsys \otimes \Ie] \big)
+ \frac{\alpha}{2} \| C \|_{\Kbb_{\tf}}^{2},
\end{equation}
using a gradient algorithm (see \cite{Palao03a, Grace07a} for details
of the gradient-based optimization). Here, $\Kbb_{\tf}$ denotes
a particular closed subspace of $L^{2} \left( [0,\tf]; \Rbb \right)$
(see \ref{app:objective} for details), and for the time-dependent
Hamiltonian in (\ref{eq:Hamiltonian2}), $U_{\tf}: \Kbb_{\tf} \to \Uc$
denotes the map, defined implicitly through the Schr\"{o}dinger equation
(\ref{eq:Schrodinger}), that takes a control field $C(t) \in \Kbb_{\tf}$
to the unitary time-evolution operator $U_{\tf} \in \Uc$. Thus,
$\Kbb_{\tf}$ is a Hilbert space of admissible controls, on which $U(t;
C)$ exists for all $t \in [0, \tf]$ and all $C \in \Kbb_{\tf}$
\cite{Jurdjevic72a}. As such, $\Jsys : \Kbb_{\tf} \to \Rbb$ is the
``dynamical'' version of the distance measure $\tilde{\Delta}_{\rHS}$,
with an additional cost on the control field fluence, where $\alpha > 0$
is the weight parameter for this cost. The gradient of $\Jsys$ is
derived in \ref{app:objective}.

\subsection{Results}

Because $0 \leq \tilde{\Delta}_{\rHS} \leq 1$ in general, it is
convenient to define gate fidelities based on the respective lower and
upper bounds of $\Fcal_{\mathrm{m,c}}(\Kcal, \Vsys)$ in
(\ref{eq:dist-fidelity2}):
\begin{subequations}
\label{eq:fidelity}
\begin{gather}
\Fcal_{\rHS}^{\rml} \big( [U_{\tf}], [\Vsys \otimes \Ie]
\big) := \left[ 1 - \tilde{\Delta}_{\rHS} ([U_{\tf}], [\Vsys \otimes
\Ie]) \right]^{2}, \\
\intertext{and} 
\Fcal_{\rHS}^{\rmu} \big( [U_{\tf}], [\Vsys \otimes \Ie]
\big) := 1 - \tilde{\Delta}_{\rHS}^{2} \big( [U_{\tf}], [\Vsys \otimes
\Ie] \big).
\end{gather}
\end{subequations}
Both fidelities are independent of the initial state and are evaluated
directly from the evolution operator $U_{\tf}$ of the composite
system. These fidelities, computed for the one-qubit Hadamard gate,
optimally controlled in the presence of a two-particle environment,
are presented in figure~\ref{fig:fidelity} for various values of the
qubit-environment coupling parameter $\gamma$. For comparison, a
fidelity based on the upper bound of the induced two-norm distance [as
defined in (\ref{eq:2-norm-bnds})] is also presented:
\begin{equation}
\Fcal_{2} \big( [U_{\tf}], [\Vsys \otimes \Ie] \big) := 
\left[ 1 - \lambda_{2} \| U_{\tf} - (\Vsys \otimes \overline{\Phi}) \|_{2} 
\right]^{2},
\end{equation}
where $\overline{\Phi}$ is defined in
(\ref{eq:2-norm-bnds})--(\ref{eq:phi-hat}). This fidelity was cast as a
semidefinite program \cite{Boyd04a} and calculated from the final-time
evolution operator $U_{\tf}$ using the SDPT3 solver \cite{Toh99a} and
the YALMIP toolbox \cite{Lofberg04a} in MATLAB. Since the evolution
operators $U_{\tf}$ produced via optimal control in this example are
very close to the target equivalence class, the upper and lower bounds
on the induced two-norm distance are nearly indistinguishable, although
this is not true in general.\footnote{For randomly selected unitary
operators in $\Uc$, the bounds in (\ref{eq:2-norm-bnds}) are well
separated and the minimizer $\underline{\Phi}$ defined in
(\ref{eq:phi-over}) is not unitary.}

\begin{figure}[ht]
\label{fig:fidelity}
\epsfxsize=0.8\textwidth \centerline{\epsffile{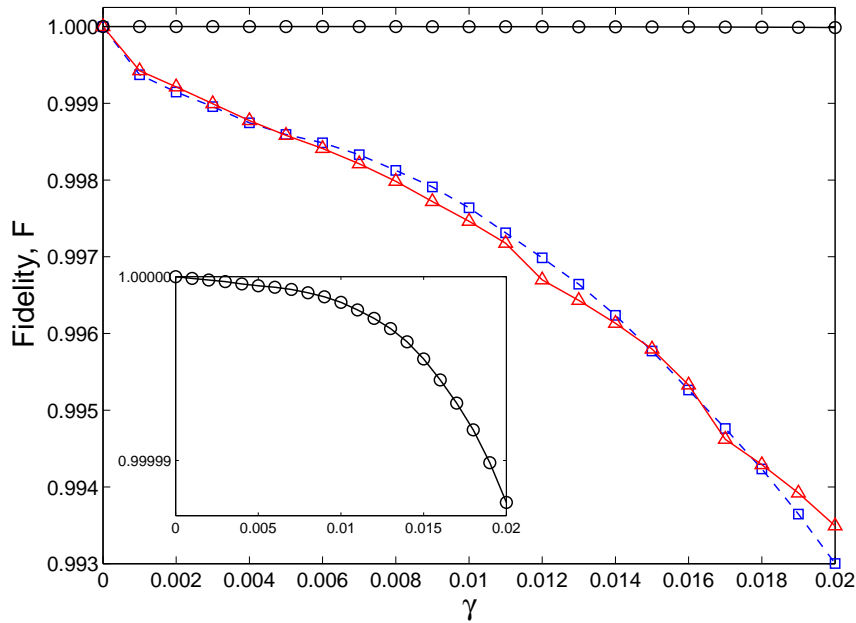}}
\caption{Gate fidelities $\Fcal_{\rHS}^{\rmu}$ (circles,
 also shown at a different scale in the inset),
 $\Fcal_{\rHS}^{\rml}$ (squares), and $\Fcal_{2}$
 (triangles) versus the coupling parameter $\gamma$, for the
 optimally controlled one-qubit Hadamard gate $H_{\mathrm{g}}$. The
 qubit is equally coupled to a two-particle environment ($q = 1$, $e
 = 2$). Values of $\gamma$ range from 0 to 0.02 in increments of
 0.001.}
\end{figure}

For the one-qubit system with no coupling to the environment ($\gamma
= 0$), the optimal control field produces the Hadamard gate with the
following values of the distance and fidelities:
$\tilde{\Delta}_{\rHS} \approx 10^{-6}$,
$\Fcal_{\rHS}^{\rmu} \approx 1 - 10^{-12}$,
$\Fcal_{\rHS}^{\rml} \approx 1 - 2 \times 10^{-6}$. For
the maximum coupling to the environment ($\gamma = 0.02$), we obtain for
the optimally controlled Hadamard gate: $\tilde{\Delta}_{\rHS} \approx
0.0035$, $\Fcal_{\rHS}^{\rmu} \approx 1 - 10^{-5}$,
$\Fcal_{\rHS}^{\rml} \approx 0.993$. For the two-qubit system with
no coupling to the environment ($\gamma = 0$), the optimal control field
produces the CNOT gate with the following values of the distance and
fidelities: $\tilde{\Delta}_{\rHS} \approx 10^{-4}$,
$\Fcal_{\rHS}^{\rmu} \approx 1 - 10^{-8}$,
$\Fcal_{\rHS}^{\rml} \approx 0.9998$. For the maximum coupling to
the environment ($\gamma = 0.01$), we obtain for the optimally
controlled CNOT gate: $\tilde{\Delta}_{\rHS} \approx 0.02$,
$\Fcal_{\rHS}^{\rmu} \approx 0.9996$, $\Fcal_{\rHS}^{\rml}
\approx 0.96$.

More significant than the actual fidelity values obtained in this
example is the demonstrated ability of the distance measure
$\tilde{\Delta}_{\rHS} \big( [U_{\tf}], [\Vsys \otimes \Ie] \big)$ to
quantitatively compare the evolution of a realistic open system to a
target unitary transformation specified for an ideal closed
system. Thus, the use of the distance $\tilde{\Delta}_{\rHS}
\big( [U_{\tf}], [\Vsys \otimes \Ie] \big)$ allows for the direct
optimization of quantum operations in the presence of an environment,
without specifying any part of the composite-system state.

\section{Conclusions}
\label{sec:conclusion}

We presented the novel and useful distance measure $\tilde{\Delta}
([U], [V])$ for the quantitative comparison of unitary quantum
operations acting on a composite quantum system, where the effect on
only one sub-system (e.g., a QIP) is important, while the effect on
the rest of the system (e.g., an environment) can be arbitrary. In
practically important situations where the target operation is
specified only for the sub-system of interest, the corresponding
measure $\tilde{\Delta} ([U], [\Vsys \otimes \Ie])$ can evaluate the
distance between the evolution of the entire composite system and the
target transformation (i.e., effectively compare unitary operations of
different dimensions). This capability is especially desirable for
measuring the distance between the actual controlled quantum process
in a realistic open system and the target unitary transformation in
its ideal closed counterpart. The measure developed in this paper does
not require the specification of the initial state of any component of
the system, is straightforward to calculate with the Hilbert-Schmidt
norm, and can be used to define relevant state-independent fidelity
measures. A fidelity measure independent of the initial state is
extremely valuable for optimal control of quantum gates, which is a
crucially important task in realistic QC.

The quotient metric developed here is also applicable to other
proposals for design and control of quantum operations. For example,
it can be used for evaluating the fidelity of quantum gates in
situations where multilevel encoding (MLE) of quantum logical states
is employed (i.e., where each logical basis state of a qubit is
encoded by multiple physical levels of a quantum system)
\cite{Grace06a}. Target unitary operations for a QIP equipped with
MLE have a tensor-product structure similar to that in
(\ref{eq:tensor-product}), and the \emph{logical equivalence} can exist
between different physical operations (i.e., different physical
processes can result in the same quantum logical operation on the
QIP). However, the environment is not explicitly included in the MLE
formalism.

Future research arising from the present work will involve the analysis
of quantum control landscapes \cite{Rabitz05a, Hsieh08a, Ho09a,
Rothman05a, Rothman05b, Rothman06a, Dominy08a, Brif09a} for objective
functionals based on various forms of the distance measure
$\tilde{\Delta}([U],[V])$, in particular, on the Hilbert-Schmidt norm
distance $\tilde{\Delta}_{\rHS} \big( [U_{\tf} (C)], [\Vsys \otimes \Ie]
\big)$. It would be interesting to investigate the structure of these
control landscapes and systematically search for robust control
solutions that incorporate requirements on realistic physical
resources. This search can be performed using a method called
diffeomorphic modulation under observable-response-preserving homotopy
(D-MORPH) \cite{Rothman05a, Rothman05b, Rothman06a, Dominy08a}. This
method provides an accurate numerical tool for finding optimal
controls. In particular, D-MORPH applied to closed quantum systems is
able to identify optimal controls generating a target unitary
transformation up to machine precision \cite{Dominy08a}. Extending the
applicability of this approach to open quantum systems is an important
goal, currently under development. Implementing efficient numerical
algorithms to calculate unitary time-evolution operators for larger
composite systems is essential for practical optimal control
simulations. Recent work has shown that such unitary maps can be
efficiently constructed from appropriately designed state-to-state maps
\cite{Merkel09a}, utilizing the favorable state-control landscape
structure \cite{Rabitz04a, Rabitz05a}. The possibility of expanding the
method developed in \cite{Merkel09a} to the optimal control of open
quantum systems is being explored.

\ack 
This work was supported by the NSF and ARO. JD~acknowledges support from
the Program in Plasma Science and Technology at Princeton. Sandia is a
multipurpose laboratory operated by Sandia Corporation, a
Lockheed-Martin Company, for the United States Department of Energy
under contract DE-AC04-94AL85000.

\appendix
\section{Equivalence of unitary operations}
\label{app:equiv}

The theorem from section~\ref{sec:equiv} is repeated here for
convenience.

\setcounter{theorem}{0}
\begin{theorem}
Let $U, V \in \Uc$. Then $\Tre \left( U \rho U^{\dag} \right) = \Tre
\left( V \rho V^{\dag} \right)$ for all density matrices $\rho \in
\Dcal(\Hc)$ if and only if $U = (\Is \otimes \Phi) V$ for some
$\Phi \in \Ue$.
\end{theorem}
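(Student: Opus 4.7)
The plan is to handle the two directions separately, treating the ``if'' direction as a short direct computation and devoting most of the work to the ``only if'' direction, which is the substantive content of the theorem.

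For the ``if'' direction, I would observe that partial trace satisfies $\Tre\big[(\Is\otimes\Phi)M(\Is\otimes\Phi^\dag)\big] = \Tre(M)$ for every $M\in\mathcal{L}(\Hc)$ and every $\Phi\in\Ue$; this is simply the cyclic property of $\Tre$ applied to the $\He$ factor. Assuming $U = (\Is\otimes\Phi)V$ and substituting into $\Tre(U\rho U^\dag)$ then immediately gives $\Tre(V\rho V^\dag)$.

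For the ``only if'' direction, set $W := UV^\dag$. Changing variables to $\sigma := V\rho V^\dag$ (which ranges over all of $\Dcal(\Hc)$ since $V$ is unitary) reduces the hypothesis to $\Tre(W\sigma W^\dag) = \Tre(\sigma)$ for all $\sigma\in\Dcal(\Hc)$, and then to $\Tre(WMW^\dag) = \Tre(M)$ for all $M\in\mathcal{L}(\Hc)$ by linear extension. The task becomes to prove $W = \Is\otimes\Phi$ for some unitary $\Phi$.

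Fix an orthonormal basis $\{|k\rangle\}$ of $\He$ and decompose $W$ in block form,
\begin{equation*}
W = \sum_{k,l=1}^{n_\rme} B_{kl}\otimes |k\rangle\langle l|,\qquad B_{kl}\in\mathcal{L}(\Hs).
\end{equation*}
Feeding in test operators $M = \rho_1\otimes|l\rangle\langle m|$ with arbitrary $\rho_1\in\mathcal{L}(\Hs)$ and comparing $\Tre(WMW^\dag)$ with $\Tre(M) = \delta_{lm}\rho_1$ yields the system
\begin{equation*}
\sum_{k=1}^{n_\rme} B_{kl}\,\rho_1\,B_{km}^\dag = \delta_{lm}\,\rho_1,\qquad \forall\, l,m,\ \forall\, \rho_1.
\end{equation*}
The diagonal case $l=m$ says that the completely positive map $\rho_1\mapsto \sum_k B_{kl}\rho_1 B_{kl}^\dag$ is the identity channel on $\Hs$; by the unitary freedom in Kraus decompositions, each $B_{kl}$ must be a scalar multiple of $\Is$, say $B_{kl} = \Phi_{kl}\Is$. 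Substituting this back, the full system collapses to $\sum_k \Phi_{kl}\Phi_{km}^{*} = \delta_{lm}$, which says precisely that the $n_\rme\times n_\rme$ matrix $\Phi := (\Phi_{kl})$ is unitary. Reassembling, $W = \Is\otimes \Phi$, giving $U = (\Is\otimes\Phi)V$ as required.

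The main obstacle is the step in which I identify $B_{kl}$ as a scalar multiple of $\Is$; this invokes the standard but nontrivial uniqueness-up-to-unitary statement for Kraus representations. An alternative, more self-contained route would be to apply the identity $\Tre(WMW^\dag)=\Tre(M)$ to rank-one test operators $M = (|\phi\rangle\otimes|l\rangle)(\langle\phi|\otimes\langle m|)$ and use polarization in $|\phi\rangle$ to directly extract $B_{kl}^\dag B_{km'} = \delta_{lm'}\delta_{kk'}\Is$-type relations, bypassing the appeal to Kraus uniqueness. Either route yields the same conclusion.
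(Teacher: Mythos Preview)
Your proposal is correct, and the ``if'' direction matches the paper's argument exactly. For the ``only if'' direction, however, you take a genuinely different route. The paper proceeds by duality: it rewrites the hypothesis as $\langle B\otimes\Ie, UAU^\dag\rangle = \langle B\otimes\Ie, VAV^\dag\rangle$ for all $A,B$ (using that $\Tre^*(B)=B\otimes\Ie$), deduces that $UV^\dag$ commutes with every $B\otimes\Ie$, and then invokes a separately-proved lemma that the centralizer of $\Us\otimes\Ie$ in $\Uc$ is exactly $\Is\otimes\Ue$. Your approach instead expands $W=UV^\dag$ in blocks over $\He$, reads off that each column $\{B_{kl}\}_k$ is a Kraus representation of the identity channel, and applies the unitary-freedom theorem for Kraus decompositions to force $B_{kl}=\Phi_{kl}\Is$. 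The paper's route is cleaner algebraically and packages the work into a reusable group-theoretic lemma; your route is more operational and makes explicit the connection to Kraus-map invariance that the paper only remarks on after the proof. Both rely on one nontrivial external fact (centralizer computation vs.\ Kraus uniqueness), so neither is strictly more elementary. One small remark: in your alternative self-contained sketch, the relation ``$B_{kl}^\dag B_{km'}=\delta_{lm'}\delta_{kk'}\Is$'' has a dangling index $k'$ and is not quite the right statement; the clean version obtainable directly from unitarity of $W$ is $\sum_k B_{kl}^\dag B_{kl'}=\delta_{ll'}\Is$, which together with your diagonal identity already suffices.
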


\begin{proof} First, if $U = (\Is \otimes \Phi)V$, then $\Tre \left(
U\rho U^{\dag} \right) = \Tre \big[ (\Is \otimes \Phi) V\rho V^{\dag}
\left( \Is \otimes \Phi^{\dag} \right) \big] = \Tre \left( V\rho
V^{\dag} \right)$, so the ``if'' implication is proved. Next, to show
the ``only if'' implication, suppose that $\Tre \left( U\rho U^{\dag}
\right) = \Tre \left( V \rho V^{\dag} \right)$ for all density matrices
$\rho$. By complex linearity, $\Tre \left( U A U^{\dag} \right) = \Tre
\left( V A V^{\dag} \right)$ for all complex matrices $A \in
\Mcal_{n} (\Cbb)$. Therefore, $\langle B, \Tre \left( U A U^{\dag}
\right) \rangle = \langle B, \Tre \left( V A V^{\dag} \right) \rangle$
for all $A \in \Mcal_{n} (\Cbb)$ and all $B \in \Mcal_{n_{\rms}}
(\Cbb)$, where
\begin{equation}
\label{eq:HS-in-prod}
\langle B_{1}, B_{2} \rangle = \Real \left[ \Tr \left(
 B_{1}^{\dag} B_{2} \right) \right], \ \ \forall B_{1}, B_{2} \in
\Mcal_{n_{\rms}} (\Cbb),
\end{equation}
is the real Hilbert-Schmidt inner product on $\Mcal_{n_{\rms}}
(\Cbb)$ [analogously for $\Mcal_{n} (\Cbb)$]. If
\begin{subequations}
\begin{align}
\label{eqa:adj-S}
& \langle B, \Tre \left( U A U^{\dag} \right) \rangle =
\langle B, \Tre \left( V A V^{\dag} \right) \rangle, \ \ \forall A \in
\Mcal_{n} (\Cbb) \ \ \textrm{and} \ \ B \in \Mcal_{n_{\rms}} (\Cbb), \\
\intertext{then}
\label{eq:adj-C}
& \langle \Tre^{*}(B), U A U^{\dag} \rangle = \langle \Tre^{*}(B), V A
V^{\dag} \rangle, \ \ \forall A \in \Mcal_{n} (\Cbb) \ \ \textrm{and} \
\ B \in \Mcal_{n_{\rms}} (\Cbb),
\end{align}
\end{subequations}
where $\Tre^{*}( \cdot )$ is the operator adjoint of
$\Tre(\cdot)$. Starting from $\langle B, \Tre(A) \rangle = \langle
\Tre^{*}(B), A \rangle$, for all $B \in \Mcal_{n_{\rms}} (\Cbb)$ and $A
\in \Mcal_{n} (\Cbb)$, we show that $\Tre^{*}(B) = B \otimes \Ie$:
\begin{subequations}
\begin{align}
\langle B, \Tre(A) \rangle & = \sum_{i, k = 1}^{n_{\rms}}
\sum_{j = 1}^{n} \left( B^{\dag} \right)_{ki} A_{ijkj}, \\ 
& = \sum_{i, k = 1}^{n_{\rms}} \sum_{j, \ell = 1}^{n} \left( B^{\dag}
 \otimes \Ie \right)_{k\ell ij} A_{ijk\ell}, \\
& = \langle \Tre^{*}(B), A \rangle \ \Rightarrow \ \Tre^{*}(B) = (B
\otimes \Ie).
\end{align}
\end{subequations}
As such, for all $A \in \Mcal_{n} (\Cbb)$ and $B \in \Mcal_{n_{\rms}}
(\Cbb)$,
\begin{subequations}
\begin{align}
& \langle B \otimes \Ie, U A U^{\dag}\rangle =
\langle B \otimes \Ie, V A V^{\dag} \rangle, \\
\label{eqa:invariant}
& \Rightarrow \langle U^{\dag}(B \otimes \Ie) U, A \rangle =
\langle V^{\dag}(B \otimes \Ie)V, A \rangle.
\end{align}
Since (\ref{eqa:invariant}) is true for all $A \in \Mcal_{n}
(\Cbb)$ and $B \in \Mcal_{n_{\rms}} (\Cbb)$, then
\begin{align}
& U^{\dag}(B \otimes \Ie) U = V^{\dag} (B \otimes \Ie) V, \\
\label{eqa:general}
& \Rightarrow (B \otimes \Ie) \left( U V^{\dag} \right) = \left( U
 V^{\dag} \right) (B \otimes \Ie).
\end{align}
Finally, because (\ref{eqa:general}) is true for all $B \in
\Mcal_{n_{\rms}} (\Cbb)$, it holds for all $W \in \Us$:
\begin{equation}
\label{eqa:unitary}
(W \otimes \Ie) \left( UV^{\dag} \right) = \left( U V^{\dag} \right) (W
\otimes \Ie),
\end{equation}
\end{subequations}
i.e., $UV^{\dag}$ commutes with $W \otimes \Ie$, for all $W$. As such,
$UV^{\dag}$ is an element of the \emph{centralizer} of $\Us \otimes \Ie$
[i.e., the subgroup of $\Uc$ that commutes with $\Us \otimes \Ie$, also
formally defined in \ref{app:norm-cent}], denoted as
$\Ccal_{\Uc}(\Us \otimes \Ie)$ \cite{Dummit03a, Lang05a}. In
\ref{app:norm-cent}, we show that the centralizer of $\Us \otimes \Ie$
is $\Is \otimes \Ue$. Thus,
\begin{equation}
UV^{\dag} = \Is \otimes \Phi \in \Is \otimes \Ue.
\end{equation}
So, we conclude that $U = (\Is \otimes \Phi)V$ for some $\Phi \in \Ue$.
\end{proof}

\section{Chaining of unitary operations}
\label{app:chain}

After defining and establishing some mathematical objects and terms
necessary for our work, we show that the chaining property from
(\ref{eq:chain}) holds for a unitarily bi-invariant quotient metric on
the homogeneous space $\Qcal$.

\begin{definition}
Let $G$ be a group and $H$ be a subgroup of $G$. The \emph{normalizer}
of $H$ in $G$, $\Ncal_{G}[H] = \{ g \in G : g H g^{-1} = H \}$, is
the largest subgroup of $G$ such that $H$ is a normal subgroup of
$\Ncal_{G}[H]$.
\end{definition}

\begin{lemma}
The normalizer of $\Us \otimes \Ie$ in $\Uc$ is $\Us \otimes \Ue
:= \{ \Psi \otimes \Phi : \Psi \in \Us \ \textrm{and} \ \Phi \in
\Ue \}$. Likewise, the normalizer of $\Is \otimes \Ue$ in $\Uc$ is $\Us
\otimes \Ue$.
\end{lemma}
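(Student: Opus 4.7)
The plan is to prove both statements by the same method, with the second following from the first by symmetry (swapping the roles of $\Hs$ and $\He$), so I focus on the normalizer of $\Us \otimes \Ie$ in $\Uc$.

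First, I would dispatch the easy inclusion $\Us \otimes \Ue \subseteq \Ncal_{\Uc}[\Us \otimes \Ie]$ by a one-line direct computation: for $\Psi \in \Us$, $\Phi \in \Ue$, and any $W \in \Us$, conjugation gives $(\Psi \otimes \Phi)(W \otimes \Ie)(\Psi \otimes \Phi)^{\dag} = (\Psi W \Psi^{\dag}) \otimes \Ie \in \Us \otimes \Ie$. For the reverse inclusion, the key step is to upgrade the assumption ``$U$ normalizes $\Us \otimes \Ie$'' to the algebraic statement ``$U$ normalizes the full matrix algebra $\Mcal_{n_{\rms}}(\Cbb) \otimes \Ie$.'' This works because $\Mcal_{n_{\rms}}(\Cbb)$ is the complex linear span of $\Us$, so the map $A \mapsto U(A \otimes \Ie)U^{\dag}$ extends by linearity (from the unitary group to its span) and still lands inside $\Mcal_{n_{\rms}}(\Cbb) \otimes \Ie$.

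Next, I would note that this conjugation defines a $*$-automorphism $\alpha$ of the matrix algebra $\Mcal_{n_{\rms}}(\Cbb)$ via $U(A \otimes \Ie)U^{\dag} = \alpha(A) \otimes \Ie$. Every $*$-automorphism of a full matrix algebra is inner (Skolem--Noether, or a short direct argument using the uniqueness of irreducible representations of $\Mcal_{n_{\rms}}(\Cbb)$), so there exists $\Psi \in \Us$ with $\alpha(A) = \Psi A \Psi^{\dag}$ for all $A$. Equivalently,
\begin{equation*}
(\Psi^{\dag} \otimes \Ie) U (A \otimes \Ie) = (A \otimes \Ie)(\Psi^{\dag} \otimes \Ie) U, \qquad \forall\, A \in \Mcal_{n_{\rms}}(\Cbb),
\end{equation*}
so $(\Psi^{\dag} \otimes \Ie) U$ lies in the centralizer of $\Us \otimes \Ie$ in $\Uc$.

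Finally, I would invoke the centralizer calculation already carried out in Appendix~A (used there to finish the proof of Theorem~1), which identifies $\Ccal_{\Uc}(\Us \otimes \Ie) = \Is \otimes \Ue$. Hence $(\Psi^{\dag} \otimes \Ie) U = \Is \otimes \Phi$ for some $\Phi \in \Ue$, giving $U = \Psi \otimes \Phi \in \Us \otimes \Ue$, which completes the reverse inclusion. The companion statement about $\Ncal_{\Uc}[\Is \otimes \Ue] = \Us \otimes \Ue$ follows by running the same argument with the two tensor factors interchanged. The main obstacle I anticipate is the legitimacy of applying Skolem--Noether cleanly in this $*$-setting; if one wishes to avoid citing it, a self-contained replacement is to pick a matrix unit $E_{ij} \in \Mcal_{n_{\rms}}(\Cbb)$ and explicitly diagonalize the action of $\alpha$ on a maximal abelian subalgebra, then recover $\Psi$ up to a phase from the intertwining relations on the off-diagonal units.
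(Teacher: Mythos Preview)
Your argument is correct but takes a genuinely different route from the paper. The paper's proof is concrete and elementary: it fixes a non-degenerate diagonal $\Lambda \in \Us$, uses the normalizing hypothesis on this single element to deduce that (after left multiplication by some $\Omega \otimes \Ie$) the operator $U$ becomes block diagonal with $n_{\rms}$ unitary blocks in $\Ue$, and then analyzes how the normalizing condition on a generic $W \in \Us$ forces all those blocks to be scalar multiples of one another. Your approach is more structural: you pass from the group $\Us$ to its linear span $\Mcal_{n_{\rms}}(\Cbb)$, recognize the induced conjugation as a $*$-automorphism of a full matrix algebra, and invoke Skolem--Noether to produce $\Psi$, then finish with the centralizer. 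Your route is shorter and more conceptual; the paper's is self-contained and avoids citing an external theorem.

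One caveat worth flagging: in the paper's logical order, the centralizer identity $\Ccal_{\Uc}(\Us \otimes \Ie) = \Is \otimes \Ue$ (Lemma~2) is proved \emph{using} the normalizer result (Lemma~1) --- the proof of Lemma~2 opens with ``Because the centralizer must be a subset of the normalizer subgroup\dots''. Appendix~A does not itself prove the centralizer; it forwards to Appendix~C. So if you cite the centralizer lemma as established, you are reasoning in a circle within the paper's framework. This is easy to fix, because at the point where you need it you have actually shown more: $(\Psi^{\dag} \otimes \Ie)U$ commutes with the full algebra $\Mcal_{n_{\rms}}(\Cbb) \otimes \Ie$, not just with the unitary group inside it. The commutant of $\Mcal_{n_{\rms}}(\Cbb) \otimes \Ie$ in $\Mcal_{n}(\Cbb)$ is $\Ie \otimes \Mcal_{n_{\rme}}(\Cbb)$ by a direct block computation that does not require Lemma~1, and intersecting with $\Uc$ gives $\Is \otimes \Ue$. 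If you make that replacement explicit, your proof stands on its own.
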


\begin{definition}
The \emph{centralizer} of $H$ in $G$, $\Ccal_{G}[H] = \{ g \in G :
g h g^{-1} = h, \forall h \in H\}$, is the subgroup of all elements in
$G$ that commute with all elements of $H$. Thus, $\Ccal_{G}[H]$ is
a subgroup of $\Ncal_{G}[H]$ \emph{\cite{Dummit03a, Lang05a}}.
\end{definition}

\begin{lemma}The centralizer of $\Us \otimes \Ie$ in $\Uc$ is $\Is
 \otimes \Ue := \{ \Ie \otimes \Phi : \Phi \in \Ue \}$. Likewise, the
 centralizer of $\Is \otimes \Ue$ in $\Uc$ is $\Us \otimes \Ie$.
\end{lemma}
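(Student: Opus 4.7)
The plan is to prove the two centralizer statements in parallel; they are symmetric under swapping the tensor factors, so I will focus on showing $\Ccal_{\Uc}[\Us \otimes \Ie] = \Is \otimes \Ue$, and indicate that the other identity follows by the same argument. The containment $\Is \otimes \Ue \subseteq \Ccal_{\Uc}[\Us \otimes \Ie]$ is immediate: for any $\Phi \in \Ue$ and $W \in \Us$, a direct computation on pure tensors gives $(W \otimes \Ie)(\Is \otimes \Phi) = W \otimes \Phi = (\Is \otimes \Phi)(W \otimes \Ie)$.

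For the nontrivial containment, I would take an arbitrary $M \in \Ccal_{\Uc}[\Us \otimes \Ie]$ and expand it in the environment basis as
\begin{equation*}
M = \sum_{\nu, \nu' = 1}^{n_{\rme}} M_{\nu \nu'} \otimes |\nu\rangle \langle \nu'|,
\end{equation*}
where each $M_{\nu \nu'} \in \Mcal_{n_{\rms}} (\Cbb)$. The centralizer condition $(W \otimes \Ie)M = M(W \otimes \Ie)$ for every $W \in \Us$ then collapses term-by-term (using linear independence of the $|\nu\rangle\langle\nu'|$) to $W M_{\nu \nu'} = M_{\nu \nu'} W$ for all $W \in \Us$ and all $\nu, \nu'$.

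The key step is then to conclude $M_{\nu\nu'} = \phi_{\nu\nu'} \Is$ for some scalar $\phi_{\nu\nu'} \in \Cbb$. Two ways to do this: (i) since $\Us$ spans $\Mcal_{n_{\rms}}(\Cbb)$ as a complex vector space, the commutation extends from $W \in \Us$ to all of $\Mcal_{n_{\rms}}(\Cbb)$, and Schur's lemma forces $M_{\nu \nu'}$ to be a scalar multiple of $\Is$; or (ii) evaluate the condition on one-parameter unitary groups $W = \exp(\rmi t X)$ with $X$ Hermitian, differentiate at $t = 0$, and conclude that $M_{\nu\nu'}$ commutes with every Hermitian matrix, hence with every matrix, hence is scalar. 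Assembling the scalars into a matrix $\Phi \in \Mcal_{n_{\rme}}(\Cbb)$ with entries $\Phi_{\nu \nu'} = \phi_{\nu \nu'}$, we get $M = \Is \otimes \Phi$. Unitarity of $M$ immediately forces $\Phi^{\dag}\Phi = \Ie$, so $\Phi \in \Ue$.

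I expect the only mildly delicate point to be the invocation of Schur's lemma: one must be careful that the commutation is with the full complex algebra rather than merely a proper subset, which is why I would first expand the unitary group to its complex linear span. The second centralizer identity, $\Ccal_{\Uc}[\Is \otimes \Ue] = \Us \otimes \Ie$, follows by an identical argument after interchanging the roles of the two tensor factors (decompose in the system basis instead), so no separate work is needed.
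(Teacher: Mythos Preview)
Your argument is correct and self-contained. It differs from the paper's route in one structural way worth noting: the paper first invokes Lemma~1 (that the normalizer of $\Us \otimes \Ie$ in $\Uc$ is $\Us \otimes \Ue$) to force any centralizing element into the tensor-product form $\Psi \otimes \Phi$, and only then observes that the centralizer condition pins $\Psi$ to the center of $\Us$, i.e., to a scalar. Your approach instead bypasses the normalizer entirely by writing $M = \sum_{\nu,\nu'} M_{\nu\nu'} \otimes |\nu\rangle\langle\nu'|$ and applying Schur's lemma block-by-block. Both end up appealing to the same fact (that the commutant of $\Us$, or equivalently of its linear span $\Mcal_{n_{\rms}}(\Cbb)$, consists of scalars), but your version is more elementary in that it does not depend on the comparatively intricate proof of Lemma~1. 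The paper's version is shorter on the page only because it cashes in work already done for the normalizer; yours would stand alone even if Lemma~1 were absent.
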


Proofs of these lemmas are provided in \ref{app:norm-cent}. The theorem
from section~\ref{sec:quot-metric} is repeated here for convenience.

\begin{theorem}
If $\Delta$ is bi-invariant on $\Uc$, then
\begin{equation}
\label{eqa:chain}
\Delta \big( U_{m} \ldots U_{2} U_{1}, V_{m} \ldots V_{2} V_{1} \big)
\leq \Delta \left( U_{1}, V_{1} \right) + \Delta \big( U_{2}, V_{2}
\big) + \ldots + \Delta \big( U_{m}, V_{m} \big) 
\end{equation}
holds for any $U_{1}, U_{2}, \ldots, U_{m} \in \Uc$ and $V_{1}, V_{2},
\ldots, V_{m} \in \Us \otimes \Ue$.
\end{theorem}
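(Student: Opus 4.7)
The plan is to prove the inequality by induction on $m$, using only the triangle inequality for $\Delta$ together with the bi-invariance hypothesis. The base case $m = 1$ is trivial. For $m = 2$, I would insert the ``hybrid'' operator $V_{2} U_{1}$ and apply the triangle inequality:
\begin{equation*}
\Delta(U_{2} U_{1}, V_{2} V_{1}) \leq \Delta(U_{2} U_{1}, V_{2} U_{1}) + \Delta(V_{2} U_{1}, V_{2} V_{1}).
\end{equation*}
Right-invariance (cancelling $U_{1}$ on the right) reduces the first term to $\Delta(U_{2}, V_{2})$, and left-invariance (cancelling $V_{2}$ on the left) reduces the second term to $\Delta(U_{1}, V_{1})$, giving the two-term bound.

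For the inductive step, I would assume the bound has been established for $m - 1$ factors and insert the hybrid $V_{m} U_{m-1} \cdots U_{1}$:
\begin{equation*}
\Delta(U_{m} \cdots U_{1}, V_{m} \cdots V_{1}) \leq \Delta(U_{m} \cdots U_{1}, V_{m} U_{m-1} \cdots U_{1}) + \Delta(V_{m} U_{m-1} \cdots U_{1}, V_{m} \cdots V_{1}).
\end{equation*}
Right-invariance (cancelling $U_{m-1} \cdots U_{1}$) turns the first term into $\Delta(U_{m}, V_{m})$, and left-invariance (cancelling $V_{m}$) turns the second term into $\Delta(U_{m-1} \cdots U_{1}, V_{m-1} \cdots V_{1})$, to which the inductive hypothesis is applied to complete the chain.

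The hypothesis $V_{i} \in \Us \otimes \Ue$ is in fact not needed for this argument at the level of $\Delta$ on $\Uc$; bi-invariance alone suffices for any $V_{i} \in \Uc$. The reason the restriction is recorded in the statement is that it is precisely the normalizer condition (established in the appendix's lemma, $\Us \otimes \Ue = \Ncal_{\Uc}[\Is \otimes \Ue]$) that allows the resulting inequality to descend to the quotient metric $\tilde{\Delta}$ on $\Qcal$: it guarantees that left multiplication by $V_{i}$ carries equivalence classes to equivalence classes, so the chained inequality obtained for $\Delta$ translates into the analogous one for $\tilde{\Delta}([U_{i}], [V_{i}])$. This physical setting (decoupled target operations that preserve system--environment isolation) is exactly where the chaining interpretation is meaningful.

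There is no real obstacle here; the argument is two applications of the triangle inequality per induction step. The only care required is to invoke right-invariance and left-invariance with the appropriate cancellation factors, and to note that the restriction on $V_{i}$, while inessential for the inequality as stated, is what makes the result useful for the quotient metric developed in section~\ref{sec:quot-metric}.
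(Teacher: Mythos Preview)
Your argument is correct for the inequality at the level of $\Delta$ on $\Uc$, and your observation that the restriction $V_{i}\in\Us\otimes\Ue$ is superfluous at that level is also correct: bi-invariance plus the triangle inequality is all that is used in your hybrid-operator induction.

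The paper, however, takes a genuinely different route. Despite the theorem being written in terms of $\Delta$, the proof in the appendix is carried out directly for the \emph{quotient} metric $\tilde{\Delta}$: it starts from $\tilde{\Delta}([U_{2}U_{1}],[V_{2}V_{1}]) = \min_{\Phi}\Delta\bigl(U_{2}U_{1},(\Is\otimes\Phi)V_{2}V_{1}\bigr)$, inserts intermediate terms involving auxiliary $\Phi_{1},\Phi_{2}\in\Ue$, and uses bi-invariance together with the triangle inequality to produce three $\Delta$-terms. The normalizer condition $V_{2}\in\Ncal_{\Uc}[\Is\otimes\Ue]=\Us\otimes\Ue$ then enters essentially, to force the cross term $\Delta\bigl((\Is\otimes\Phi_{2})V_{2},\,V_{2}(\Is\otimes\Phi_{1}^{\dag})\bigr)$ to vanish for a suitable $\Phi_{2}$, after which the remaining two terms are minimized to $\tilde{\Delta}([U_{1}],[V_{1}])+\tilde{\Delta}([U_{2}],[V_{2}])$.

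So the comparison is this: you cleanly separate the trivial $\Delta$-level chaining from the nontrivial descent to $\tilde{\Delta}$, and correctly identify the normalizer as the ingredient enabling that descent, but you do not actually carry out the quotient-level argument. The paper's proof does the whole job at the quotient level in one pass, which is where the hypothesis on the $V_{i}$ actually earns its keep. Your approach is more transparent about why the hypothesis is needed; the paper's approach delivers the result that is actually of interest (chaining for $\tilde{\Delta}$) rather than the weaker statement literally written.
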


\begin{proof}
We demonstrate that (\ref{eqa:chain}) holds for $m = 2$. This result
can be generalized to arbitrary values of $m$. If $\Delta$ is
bi-invariant on $\Uc$, i.e., for any $U, V, W_{1}, W_{2} \in \Uc$,
$\Delta(W_{1}UW_{2}, W_{1}VW_{2}) = \Delta(U, V)$, then, for any $U_{1},
U_{2}, V_{1}, V_{2} \in \Uc$,
\begin{subequations}
\label{eqa:chain-proof}
\begin{align}
\tilde{\Delta} \big( [U_{2}U_{1}], [V_{2}V_{1}] \big) & = \min_{\Phi}
\left\{ \Delta \big[ U_{2}U_{1}, (\Is \otimes \Phi) V_{2} V_{1} \big]
\right\}, \\
& \leq \min_{\Phi} \big\{ \Delta \big[ U_{2}U_{1}, U_{2} (\Is \otimes
 \Phi_{1})V_{1} \big] \nonumber \\
& \hspace*{0.48cm} + \Delta \big[ (U_{2}(\Is \otimes \Phi_{1})V_{1},
 (\Is \otimes \Phi) V_{2} V_{1} \big] \big\}, \\
& = \Delta \big[ U_{1}, (\Is \otimes \Phi_{1}) V_{1} \big] +
\min_{\Phi} \left\{ \Delta \big[ U_{2}, (\Is \otimes \Phi) V_{2} (\Is
 \otimes \Phi_{1}^{\dag}) \big] \right\}, \\ 
& \leq \Delta \big[ U_{1}, (\Is \otimes \Phi_{1}) V_{1} \big] +
\min_{\Phi} \left\{ \Delta \big[ U_{2}, (\Is \otimes \Phi \Phi_{2}) V_{2}
 \big] \right\}
\nonumber \\
& \hspace*{0.48cm} + \Delta \big[ (\Is \otimes \Phi_{2}) V_{2},
V_{2} (\Is \otimes \Phi_{1}^{\dag}) \big].
\end{align}
\end{subequations}
The final result of (\ref{eqa:chain-proof}) is obtained through repeated
usage of the bi-invariance of $\Delta$ and the triangle inequality.
Because this inequality holds for all $\Phi_{1}, \Phi_{2} \in \Ue$, it
holds for the $\Phi_{1}$ that minimizes $\Delta \big[ U_{1}, (\Is
\otimes \Phi_{1}) V_{1} \big]$. Now, if $V_{2} \in \Ncal_{\Uc}
[\Is \otimes \Ue]$, i.e., $V_{2}$ is an element of the normalizer of
$\Is \otimes \Ue$ in $\Uc$, then for each $\Phi_{1} \in \Ue$ there
exists a $\Phi_{2}\in \Ue$ such that $\Delta \big[ (\Is \otimes
\Phi_{2})V_{2},V_{2} (\Is \otimes \Phi_{1}^{\dag}) \big] = 0$. In
particular, such a $\Phi_{2}$ exists for the minimizing operator
$\Phi_{1}$. Therefore, under these conditions,
\begin{subequations}
\begin{align}
\tilde{\Delta} \big( [U_{2}U_{1}], [V_{2}V_{1}] \big) & \leq
\min_{\Phi_{1}} \left\{ \Delta \big[ U_{1}, (\Is \otimes \Phi_{1})V_{1}
 \big] \right\} + \min_{\Phi} \left\{ \Delta \big[ U_{2}, (\Is \otimes
 \Phi \Phi_{2})V_{2} \big] \right\}, \\
& = \tilde{\Delta} \big( [U_{1}], [V_{1}] \big) + \tilde{\Delta} \big(
[U_{2}], [V_{2}] \big).
\end{align}
\end{subequations}
\end{proof}

From this result, the chaining property holds when $V_{2} \in
\Ncal_{\Uc} [\Is \otimes \Ue] = \Us \otimes \Ue$, i.e., the set
of all decoupled or factorizable unitary operators in $\Uc$. This
condition is satisfied by any $V_{2}$ that preserves the isolation of
the system $\Scal$ from the environment $\Ecal$, a crucial
target in QC.

\section{\texorpdfstring{The normalizer and centralizer of 
$\bm{\Us \otimes \Ie}$ in $\bm{\Uc}$}
{The normalizer and centralizer}}
\label{app:norm-cent}

The lemmas from \ref{app:chain} are repeated here for convenience.

\setcounter{lemma}{0}
\begin{lemma}
The normalizer of $\Us \otimes \Ie$ in $\Uc$ is $\Us \otimes \Ue
:= \{ \Psi \otimes \Phi : \Psi \in \Us \ \textrm{and} \ \Phi \in
\Ue \}$. The normalizer of $\Is \otimes \Ue$ in $\Uc$ is also $\Us
\otimes \Ue$.
\end{lemma}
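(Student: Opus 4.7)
The forward inclusion $\Us \otimes \Ue \subseteq \Ncal_{\Uc}[\Us \otimes \Ie]$ is immediate from the direct computation $(\Psi \otimes \Phi)(W \otimes \Ie)(\Psi^{\dag} \otimes \Phi^{\dag}) = (\Psi W \Psi^{\dag}) \otimes \Ie$ for any $\Psi \in \Us$, $\Phi \in \Ue$, $W \in \Us$. The substantive content is the reverse inclusion, to which I will devote the rest of the plan.

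Given $g \in \Ncal_{\Uc}[\Us \otimes \Ie]$, for each $W \in \Us$ there is a unique $\sigma(W) \in \Us$ such that $g(W \otimes \Ie) g^{\dag} = \sigma(W) \otimes \Ie$, by injectivity of the map $W' \mapsto W' \otimes \Ie$. It is straightforward to verify that $\sigma : \Us \to \Us$ is a continuous group automorphism, so that $W \mapsto W$ (the defining representation) and $W \mapsto \sigma(W)$ are two continuous irreducible unitary representations of $\Us$ on $\Hs$.

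The key step is to expand $g$ in block form using the environment basis $\{|\nu\rangle\}$, writing $g = \sum_{\mu, \nu = 1}^{n_{\rme}} A_{\mu \nu} \otimes |\mu\rangle\langle\nu|$ with $A_{\mu \nu} \in \Mcal_{n_{\rms}}(\Cbb)$. Translating the normalizer condition into this decomposition yields
\begin{equation*}
A_{\mu \nu} W = \sigma(W) A_{\mu \nu}, \quad \forall W \in \Us,\ \forall \mu, \nu,
\end{equation*}
i.e., each $A_{\mu \nu}$ is an intertwiner between the two irreducible representations above. Since $g$ is unitary (hence nonzero), some $A_{\mu \nu}$ is nonzero, so Schur's lemma forces the representations to be equivalent: there exists an invertible $\Psi \in \Mcal_{n_{\rms}}(\Cbb)$ with $\sigma(W) = \Psi W \Psi^{-1}$, and every intertwiner is a scalar multiple of that \emph{same} $\Psi$. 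Hence $A_{\mu \nu} = c_{\mu \nu} \Psi$ for scalars $c_{\mu \nu}$, yielding $g = \Psi \otimes \Phi$ with $\Phi = \sum_{\mu, \nu} c_{\mu \nu} |\mu\rangle\langle\nu|$.

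Unitarity of $g$ then gives $(\Psi^{\dag} \Psi) \otimes (\Phi^{\dag} \Phi) = \Ic$, from which $\Psi^{\dag} \Psi = \alpha \Is$ and $\Phi^{\dag} \Phi = \alpha^{-1} \Ie$ for some $\alpha > 0$; absorbing $\sqrt{\alpha}$ into $\Psi$ and its inverse into $\Phi$ places both in the unitary group, so $g \in \Us \otimes \Ue$. The parallel statement for the normalizer of $\Is \otimes \Ue$ in $\Uc$ follows by interchanging the roles of $\Hs$ and $\He$ throughout. The main obstacle is the clean application of Schur's lemma in the block-decomposition step---specifically, ensuring that $\sigma$ is genuinely a representation (so that Schur applies) and that a \emph{single} $\Psi$ serves as the intertwiner for every nonzero block, which is exactly what forces $g$ to factor as a single tensor product rather than a more complicated superposition.
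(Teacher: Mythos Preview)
Your proof is correct and complete, but it follows a genuinely different route from the paper's own argument. You decompose $g$ with respect to the \emph{environment} basis and invoke Schur's lemma: the normalizer condition makes each block $A_{\mu\nu}$ an intertwiner between the defining representation and its twist by the automorphism $\sigma$, and Schur forces all blocks to be scalar multiples of a single invertible $\Psi$, yielding the tensor factorization directly. The paper instead works in the \emph{system} basis: it first conjugates by a fixed non-degenerate diagonal $\Lambda \in \Us$ to show that, up to a left factor of the form $\Omega^{\dag}\otimes\Ie$, any normalizer element must lie in the stabilizer of $\Lambda\otimes\Ie$ and hence be block-diagonal with $n_{\rms}$ unitary blocks $V_{i}\in\Ue$; it then tests the normalizer condition against arbitrary $W\in\Us$ to conclude that all $V_{i}$ are scalar multiples of $V_{1}$. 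Your argument is more conceptual and generalizes cleanly (Schur does all the structural work), while the paper's is more elementary and self-contained, avoiding any appeal to representation theory beyond basic linear algebra and the explicit stabilizer of a diagonal matrix.
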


\begin{proof}
We determine the structure of the normalizer of $\Us \otimes \Ie$ in
$\Uc$; the structure of the normalizer of $\Is \otimes \Ue$ follows
from the symmetry of the tensor product. First, we define a fixed
non-degenerate diagonal unitary matrix $\Lambda \in \Us$. Note that if
$U \in \Ncal_{\Uc} [\Us \otimes \Ie]$, then in particular,
$U(\Lambda \otimes \Ie)U^{\dag} \in \Us \otimes \Ie$. For simplicity,
let $U(\Lambda \otimes \Ie)U^{\dag} = V \otimes \Ie$, where $V \in
\Us$. Because $U$ is unitary, $V \otimes \Ie$ and $\Lambda \otimes \Ie$
have the same eigenvalues (they are related by a similarity
transformation), hence $V$ and $\Lambda$ must have the same eigenvalues,
so there exists an $\Omega \in \Us$ such that $\Omega V \Omega^{\dag} =
\Lambda$. Therefore,
\begin{equation}
(\Omega \otimes \Ie) U (\Lambda \otimes \Ie) [(\Omega \otimes \Ie)
U]^{\dag} = \Lambda \otimes \Ie,
\end{equation}
demonstrating that $(\Omega \otimes \Ie)U$ is an element of the
stabilizer subgroup of $\Lambda \otimes \Ie$ with respect to the group
action of conjugation. Hence,
\begin{equation}
(\Omega \otimes \Ie) U \in \underbrace{\Ue \oplus \cdots \oplus
 \Ue}_{n_{\rms} \ \textrm{times}},
\end{equation}
i.e., $(\Omega \otimes \Ie)U$ is block diagonal with $n_{\rms}$
blocks, each of which are $n_{\rme}\times n_{\rme}$
unitary matrices. So we have shown that any $U$ in the normalizer can
be written as $U = (\Omega \otimes \Ie)V$ for some $\Omega \in \Us$
and $V \in \Ue \oplus \cdots \oplus \Ue$. We can write
\begin{equation}
V = \left( \begin{array}{cccc}
V_{1} \\ 
& V_{2} \\
& & \ddots \\
& & & V_{n_{\rms}}
\end{array} \right),
\end{equation}
where each $V_{i} \in \Ue$ for all $i$.

In order for $U$ to be in the normalizer subgroup, it must be true that
for each $W \in \Us$, there exists an $X \in \Us$ such that $U(W \otimes
\Ie) = (X \otimes \Ie)U$. If $U = (\Omega^{\dag} \otimes \Ie)V$, we must
have $V(W \otimes \Ie) = (\Omega X \Omega^{\dag} \otimes \Ie) V$. This
condition implies that $w_{ij}V_{i} = (\Omega X \Omega)_{ij}V_{j}$ for
all $1 \leq i, j \leq n_{\rms}$. Because $W$ is arbitrary, for each $i$
consider the case where $w_{i1} = 1$ (and $w_{j1} = 0$ for $j \neq i$),
implying that $V_{i}$ is a scalar times $V_{1}$. Hence, $V =
\mathrm{diag} \left[ \exp(\rmi \theta_{1}), \dots, \exp(\rmi
\theta_{n_{\rms}}) \right] \otimes \Phi\in \Us \otimes \Ue$. Therefore,
we conclude that $U = (\Omega^{\dag} \otimes \Ie) V \in \Us \otimes \Ue$
and $\Ncal_{\Uc} [\Us \otimes \Ie] \subset \Us \otimes \Ue$.

For any $U = \Psi \otimes \Phi \in \Us \otimes \Ue$, $U (\Us \otimes
\Ie)U^{\dag} = \Psi \Us \Psi^{\dag} \otimes \Ie = \Us \otimes \Ie$, so
that $\Ncal_{\Uc} [\Us \otimes \Ie] \supset \Us \otimes \Ue$.
Therefore, $\Ncal_{\Uc} [\Us \otimes \Ie] = \Us \otimes \Ue$
\end{proof}

\begin{lemma}The centralizer of $\Us \otimes \Ie$ in $\Uc$ is $\Ie
 \otimes \Ue$. Likewise, the centralizer of $\Ie \otimes \Ue$ in
 $\Uc$ is $\Us \otimes \Ie$.
\end{lemma}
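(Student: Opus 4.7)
The plan is to prove the characterization of $\Ccal_{\Uc}[\Us \otimes \Ie]$ by a two-sided inclusion, piggybacking on the preceding normalizer lemma to make the hard direction short. The easy inclusion $\Is \otimes \Ue \subseteq \Ccal_{\Uc}[\Us \otimes \Ie]$ is immediate from the tensor-product structure: for any $\Phi \in \Ue$ and any $W \in \Us$, the operators $\Is \otimes \Phi$ and $W \otimes \Ie$ act nontrivially on complementary tensor factors, so
\begin{equation}
(\Is \otimes \Phi)(W \otimes \Ie) = W \otimes \Phi = (W \otimes \Ie)(\Is \otimes \Phi),
\end{equation}
which is a single-line verification.

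For the reverse inclusion I would exploit the elementary group-theoretic fact that the centralizer of any subgroup is contained in its normalizer. Lemma~1 of \ref{app:norm-cent} has already shown $\Ncal_{\Uc}[\Us \otimes \Ie] = \Us \otimes \Ue$, so every $U \in \Ccal_{\Uc}[\Us \otimes \Ie]$ admits a factorized form $U = \Psi \otimes \Phi$ with $\Psi \in \Us$ and $\Phi \in \Ue$. Imposing the centralizer condition against $W \otimes \Ie$ for arbitrary $W \in \Us$ then reduces to the requirement $\Psi W \otimes \Phi = W\Psi \otimes \Phi$, i.e.\ $\Psi W = W\Psi$ for every $W \in \Us$. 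Hence $\Psi$ lies in the center of $\Us$, which consists only of scalar multiples of the identity, so $\Psi = \rme^{\rmi\theta}\Is$ for some $\theta \in \Rbb$. Absorbing this global phase into the environment factor yields $U = \Is \otimes (\rme^{\rmi\theta}\Phi) \in \Is \otimes \Ue$, establishing the desired inclusion.

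The symmetric claim that $\Ccal_{\Uc}[\Is \otimes \Ue] = \Us \otimes \Ie$ then follows by exchanging the two tensor factors throughout the argument, using the second half of the normalizer lemma in place of the first. The main substantive obstacle has already been paid for in the normalizer lemma (where the block-diagonal analysis and spectral argument live); once the factorized form $\Psi \otimes \Phi$ is in hand, the only remaining input is the standard fact $Z(\mathrm{U}(\Hs)) = \{\rme^{\rmi\theta}\Is : \theta \in \Rbb\}$, which is a textbook consequence of Schur's lemma applied to the defining representation of $\Us$.
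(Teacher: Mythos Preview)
Your proposal is correct and follows essentially the same route as the paper's proof: both invoke the normalizer lemma to reduce to a factorized form $\Psi\otimes\Phi$, then observe that the centralizer condition forces $\Psi$ into the center of $\Us$, absorb the resulting phase into $\Phi$, and handle the reverse inclusion and the symmetric claim in the obvious way. The only cosmetic difference is that the paper phrases the centralizer condition via conjugation $(\Psi\otimes\Phi)(V\otimes\Ie)(\Psi\otimes\Phi)^{\dag}=V\otimes\Ie$ rather than commutation, which is of course equivalent.
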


\begin{proof}
We consider the centralizer of $\Us \otimes \Ie$ in $\Uc$, since the
centralizer of $\Ie \otimes \Ue$ follows from the symmetry of the tensor
product. Because the centralizer must be a subset of the normalizer
subgroup, each $U \in \Ccal_{\Uc} [\Us \otimes \Ie]$ must be of
the form $U = \Psi \otimes \Phi$ where $\Psi \in \Us$ and $\Phi \in
\Ue$. So,
\begin{align}
\Ccal_{\Uc} [\Us \otimes \Ie] = & \
 \{ \Psi \otimes \Phi : \Psi \in \Us, \Phi \in \Ue
\nonumber \\ & \ \ \textrm{and} \
(\Psi \otimes \Phi)(V \otimes \Ie)(\Psi \otimes \Phi)^{\dag} = V \otimes
\Ie, \ \forall V \in \Us \}.
\end{align}
Now, $(\Psi \otimes \Phi) (V \otimes \Ie) (\Psi \otimes \Phi)^{\dag} =
(\Psi V \Psi^{\dag}) \otimes \Ie$, so the only condition on $\Psi$ and
$\Phi$ is that $\Psi$ must lie in the center of $\Us$, i.e., $\Psi =
\exp(\rmi \theta)\Ie$ and $\Phi \in \Ue$. Then, $\Psi \otimes
\Phi = \exp(\rmi\theta) \Ie \otimes \Phi = \Ie \otimes \left[
\exp(\rmi\theta) \Phi \right] \in \Ie \otimes \Ue$. Therefore,
$\Ccal_{\Uc} [\Us \otimes \Ie] \subset \Ie \otimes \Ue$. Because
every element of $\Ie \otimes \Ue$ clearly commutes with $\Us \otimes
\Ie$, we conclude that $\Ccal_{\Uc} [\Us \otimes \Ie] = \Ie
\otimes \Ue$. 
\end{proof}

\section{Centrality of the maximally-mixed state}
\label{app:rho-min-max}

We show that $\rho_{\rmm, \rms} := \Is/n_{\rms}$ is the density matrix
with the ``shortest'' maximum distance, under the Hilbert-Schmidt norm,
to all other density matrices, i.e.,
\begin{subequations}
\begin{gather}
\rho_{\rmm, \rms} = \arg \min_{\rho_{2}} \left( \max_{\rho_{1}} \|
\rho_{1} - \rho_{2} \|_{\rHS}^{2} \right), \ \ \forall \rho_{1},
\rho_{2} \in \Dcal(\Hs), \\ 
\intertext{where}
\min_{\rho_{2}} \left( \max_{\rho_{1}} \| \rho_{1} - \rho_{2}
\|_{\rHS}^{2} \right) = \frac{n_{\rms}-1}{n_{\rms}}.
\end{gather}
\end{subequations}

\begin{proof}
Let $\rho_{2} = \sum_{i} \zeta_{i} |i\rangle\langle i|$ with $0 \leq
\zeta_{1} \leq \dots \leq \zeta_{n}\leq 1$, and define the
Hilbert-Schmidt distance as
\begin{equation}
f(\rho_{1}) := \| \rho_{2} - \rho_{1} \|_{\rHS}^{2} = \| \rho_{2}
\|_{\rHS}^{2} + \| \rho_{1} \|_{\rHS}^{2} - 2\Tr(\rho_{2} \rho_{1}). 
\end{equation}
If $\rho_{1} = |1\rangle\langle 1|$, then $f(\rho_{1}) = \| \rho_{2}
\|_{\rHS}^{2} + 1 - 2\zeta_{1}$. Thus, $\max_{\rho_{1}} f(\rho_{1}) \geq
\|\rho_{2}\|_{\rHS}^{2} + 1 - 2\zeta_{1}$. Now, $\Tr(\rho_{2}\rho_{1}) =
\sum_{i} \zeta_{i}\langle i|\rho_{1}|i\rangle$ for all $\rho_{1}$, where
$\sum_{i} \langle i| \rho_{1} |i\rangle = \Tr (\rho_{1}) = 1$, so
$\Tr(\rho_{2}\rho_{1}) \geq \zeta_{1}$. Also, $\| \rho_{1} \|_{\rHS}^{2}
\leq 1$ in general, which implies that $f(\rho_{1}) \leq \| \rho_{2}
\|_{\rHS}^{2} + 1 - 2\zeta_{1}$ for all $\rho_{1}$. Therefore,
\begin{equation}
\max_{\rho_{1}} \| \rho_{2} - \rho_{1} \|_{\rHS}^{2} = \| \rho_{2}
\|_{\rHS}^{2} + 1 - 2 \zeta_{1}.
\end{equation}
To minimize this expression over $\rho_{2}$, we want to minimize $\|
\rho_{2} \|_{\rHS}^{2}$ and maximize $\zeta_{1}$. Because $\zeta_{1}$
is the minimum eigenvalue of $\rho_{2}$, its maximum value of
$1/n_{\rms}$ occurs when $\rho_{2} = \Is/n_{\rms}$. Likewise, $\|
\rho_{2} \|_{\rHS}^{2}$ takes its minimum value $1/n_{\rms}$ at
$\rho_{2} =\Is/n_{\rms}$. So, indeed the argument of $\min_{\rho_{2}}
\max_{\rho_{1}} \| \rho_{2} - \rho_{1} \|_{\rHS}^{2}$ is $\rho_{\rmm,
 \rms} = \Is/n_{\rms}$ and its value is $(n_{\rms} - 1)/n_{\rms}$.
\end{proof}

\section{Relating the distance measure to fidelity}
\label{app:fidelity}

We demonstrate that the inequality relation between the distance and
fidelity of quantum states presented in (\ref{eq:dist-fidelity1})
\cite{Fuchs99a} also applies to the distance and fidelity of quantum
operations given by $\tilde{\Delta}_{\rHS} ([U], [\Vsys \otimes \Ie])$
in (\ref{eq:HS-dist-c}) and $\Fcal_{\mathrm{m,c}}(\Kcal,
\Vsys)$ in (\ref{eq:fidelity-mixed}):
\begin{equation}
\label{eqa:dist-fidelity}
\left[ 1 - \tilde{\Delta}_{\rHS} ([U], [\Vsys \otimes \Ie]) \right]^{2}
\leq \Fcal_{\mathrm{m,c}}(\Kcal, \Vsys) \leq 1 -
\tilde{\Delta}_{\rHS} ([U], [\Vsys \otimes \Ie])^{2}.
\end{equation}

\begin{proof}
We first show that the bounds on
$\Fcal_{\mathrm{m,c}}(\Kcal, \Vsys)$ follow from $\| \Gamma
\|_{2} = \max_{i} \sigma_{i} \leq n_{\rms}$, where $\sigma_{i}$ is the
$i$th singular value of $\Gamma$. Then we show that $\| \Gamma \|_{2}
\leq n_{\rms}$ holds.

The upper bound holds if and only if $\| \Gamma \|_{\rHS}^{2}
\leq n_{\rms} \| \Gamma \|_{\Tr}$, or equivalently, $\sum_{i}
\sigma_{i}^{2} \leq n_{\rms} \sum_{i} \sigma_{i}$. Assuming that
$\max_{i} \sigma_{i} \leq n_{\rms}$, it immediately follows that
$\sum_{i} \sigma_{i}^{2} \leq n_{\rms} \sum_{i} \sigma_{i}$, which
establishes the upper bound.

With some algebraic rearrangement, the lower bound becomes $\| \Gamma
\|_{\Tr} + (1/n_{\rms})\| \Gamma \|_{\rHS}^{2} \leq 2\sqrt{n_{\rme}}\|
\Gamma \|_{\rHS}$. If $\| \Gamma \|_{2} \leq n_{\rms}$, which implies
that $\| \Gamma \|_{\rHS}^{2} \leq n_{\rms} \| \Gamma \|_{\Tr}$ and $\|
\Gamma \|_{\Tr} + 1/n_{\rms} \| \Gamma \|_{\rHS}^{2} \leq 2 \| \Gamma
\|_{\Tr}$, the lower bound holds if $\| \Gamma \|_{\Tr} \leq
\sqrt{n_{\rme}} \| \Gamma \|_{\rHS}$. This is a known inequality for any
matrix and thus establishing the lower bound. To prove this inequality,
let $\vec{\sigma}$ and $\vec{\tau}$ be two $k$-dimensional vectors
(where $k \leq n_{\rme}$) with entries $\sigma_{i}$, which are the
singular values of $\Gamma$, and $\tau_{i} = 1$ for all $i$,
respectively. It follows from the Cauchy-Schwarz inequality that $\|
\Gamma \|_{\Tr} = \sum_{i} \sigma_{i} = \| \vec{\sigma} \|_{1} = \langle
\vec{\sigma}, \vec{\tau} \rangle \leq \| \vec{\sigma} \|_{2} \|
\vec{\tau} \|_{2} = \sqrt{k} \| \vec{\sigma} \|_{2} = \sqrt{k} \| \Gamma
\|_\rHS \leq \sqrt{n_{\rme}} \| \Gamma \|_\rHS$, where, for a vector,
$\| \cdot \|_{2}$ is the Euclidean norm (also referred to as the
$\ell_{2}$ norm):
\begin{equation}
\| \vec{\sigma} \|_{2} := \left( \sum_{i = 1}^{k} \sigma_{i}^{2}
\right)^{2}.
\end{equation}

To show that $\sigma_{i} \leq n_{\rms}$, consider
\begin{subequations}
\begin{align}
\label{eqa:2-norm1}
\| \Gamma |\psi\rangle \|_{2} & = \| \left\{ \Trs \left[ U \left(
\Vsys^{\dag} \otimes \Ie \right) \right] \right\} |\psi \rangle
\|_{2}, \\
\label{eqa:2-norm2}
& = \| \Trs \left[ U ( \Vsys^{\dag} \otimes \Ie ) (\Is \otimes |\psi
 \rangle) \right] \|_{2},
\end{align}
where $\dim \{ |\psi\rangle \} = n_{\rme}$ and $\langle\psi |\psi\rangle
= 1$. Applying the Cauchy-Schwarz inequality \cite{Horn90a} to the
argument of (\ref{eqa:2-norm2}) yields
\begin{equation}
\| \Gamma |\psi \rangle \|_{2} \leq \sqrt{n_{\rms}} \| U \left(
 \Vsys^{\dag} \otimes \Ie \right) (\Is \otimes |\psi \rangle) \|_{2}.
\end{equation}
Because the two-norm is bounded from above by the Hilbert-Schmidt
norm \cite{Horn90a},
\begin{align}
\| \Gamma |\psi\rangle \|_{2} & \leq \sqrt{n_{\rms}} \| U \left(
 \Vsys^{\dag} \otimes \Ie \right) (\Is \otimes |\psi
 \rangle) \|_{\rHS}, \\
& = \sqrt{n_{\rms}} \| \Is \otimes |\psi\rangle \|_{\rHS}, \\
& = n_{\rms} \langle\psi |\psi \rangle = n_{\rms}.
\end{align}
\end{subequations}
Since $\| \Gamma |\psi\rangle \|_{2} \leq n_{\rms}$ is true for all
normalized $|\psi\rangle$, it follows that $\| \Gamma \|_{2} \leq
n_{\rms}$.
\end{proof}

\section{\texorpdfstring{Optimal control objective functional: 
$\bm{\J}$}{Optimal control objective functional}}
\label{app:objective}

In section~\ref{ssec:algorithm} we use the objective functional $\Jsys$
in (\ref{eq:functional}), which corresponds to a special situation
(although ubiquitous in QC), where the target unitary transformation
$\Vsys \in \Us$ is specified for a closed system (i.e., for the system
$\Scal$). As discussed in section~\ref{ssec:closedsystem-target},
in this situation the target unitary operator $V \in \Uc$ should be
replaced by the tensor product $\Vsys \otimes \Ie$ in the quotient
metric, leading to the Hilbert-Schmidt norm distance
$\tilde{\Delta}_{\rHS} ([U], [\Vsys \otimes \Ie])$ of
(\ref{eq:HS-dist-c}). However, in this appendix, we consider a more
general case: an arbitrary target unitary operator $V \in \Uc$, not
necessarily the tensor-product one. The corresponding objective
functional $\J$ is
\begin{equation}
\label{eq:functional-general}
\J(C) := \tilde{\Delta}_{\rHS} \big( [U_{\tf} (C)], [V] \big)
+ \frac{\alpha}{2} \| C \|_{\Kbb_{\tf}}^{2},
\end{equation}
where the Hilbert-Schmidt norm distance $\tilde{\Delta}_{\rHS} ([U],
[V])$ is defined by (\ref{eq:HS-dist-b}) and (\ref{eq:gamma}).

\subsection{\texorpdfstring{Computing $\grad \tilde{\Delta} ([U],
 [V])_{\rHS}$}{Computing the gradient of the distance}}

Let $\Hm_{n}$ denote the space of $n \times n$ Hermitian matrices, i.e.,
$\Hm_{n} = \{A \in \Mcal_{n}(\Cbb) \ : \ A^{\dag} = A \}$. We endow the
linear spaces $\Hm_{n}$ and $\Mcal_{n}(\Cbb)$ with the real
Hilbert-Schmidt inner product in (\ref{eq:HS-in-prod}), and $\Uc$ is
given the Riemannian metric induced by this inner product. Define maps
$y$, $Z$, and $\Gamma_{V}$ by
\begin{equation}
\begin{array}{lcl}
y : \Hm_{n_{\rme}} \to \Rbb & \qquad & y(Z) := \sqrt{1 - n^{-1} \Tr(Z)},
\\
Z : \Mcal_{n{_\rme}} (\Cbb) \to \Hm_{n_{\rme}} & & Z(\Gamma) :=
\sqrt{\Gamma^{\dag} \Gamma}, \\
\Gamma_{V} : \Uc \to \Mcal_{n_{\rme}} (\Cbb) & & \Gamma_{V}(U) := \Trs
\left[ UV^{\dag} \right].
\end{array}
\end{equation}
For any fixed equivalence class $[V] \in \Qcal$, let
$\Delta_{\rHS}^{[V]}: \Uc \to \Rbb$ be defined by
$\Delta_{\rHS}^{[V]}(U) :=\tilde{\Delta}_{\rHS}([U], [V])$. Then
$\Delta_{\rHS}^{[V]}(U) = y \circ Z \circ \Gamma_{V}(U)$, and the
differential of $\Delta_{\rHS}^{[V]}$ at the point $U \in \Uc$ in the
direction $\delta U \in T_{U} \Uc$ is given by the chain rule:
\begin{equation}
\label{eqna:diff-chain}
\rmd_{U} \Delta_{\rHS}^{[V]} (\delta U) = \rmd_{Z \left[ \Gamma_{V} (U)
 \right]} y \circ \rmd_{\Gamma_{V} (U)} Z \circ \rmd_{U}
\Gamma_{V}(\delta U).
\end{equation}
The differential and the gradient are related through the Riemannian
metric by $\rmd_{U} \Delta_{\rHS}^{[V]} (\delta U) = \langle \grad
\Delta_{\rHS}^{[V]} (U), \delta U \rangle$ for all $\delta U \in
T_{U}\Uc$ \cite{doCarmo92a}. Using (\ref{eqna:diff-chain}), we
can also write
\begin{subequations}
\begin{align}
\rmd_{U} \Delta_{\rHS}^{[V]} (\delta U) & = \Big\langle \grad
y \big\{ Z \left[ \Gamma_{V}(U) \right] \big\}, \ \rmd_{\Gamma_{V}(U)} Z
\circ \rmd_{U}\Gamma_{V}(\delta U) \Big\rangle, \\
& = \Big\langle \big(\rmd_{U} \Gamma_{V}\big)^{*} \circ \big(
\rmd_{\Gamma_{V}(U)} Z \big)^{*} \Big( \grad y \big\{ Z \left[
 \Gamma_{V}(U) \right] \big\} \Big), \ \delta U \Big\rangle,
\end{align}
\end{subequations}
where $\grad y \big\{ Z \left[ \Gamma_{V}(U) \right] \big\}$ denotes the
gradient of $y$ at the point $Z \left[ \Gamma_{V}(U) \right]$ and $^{*}$
denotes the operator adjoint. So, we can compute the desired gradient as
\begin{equation}
\label{eqna:grad-chainrule}
\grad \Delta_{\rHS}^{[V]} (U) = \big( \rmd_{U} \Gamma_{V} \big)^{*}
\Big\{ \left( \rmd_{\Gamma_{V}(U)}Z \right)^{*} \big[ \grad y \big\{ Z
\left[ \Gamma_{V}(U) \right] \big\} \big] \Big\}.
\end{equation}
To do this, we must find expressions for the gradient and two adjoint
operators on the right hand side of (\ref{eqna:grad-chainrule}). We
develop each of these differentials separately.

We begin by first computing $\grad y$. For $y(Z) = \sqrt{1 - n^{-1}
\Tr(Z)}$, the argument of the square root is an affine scalar
function of $Z$, so $\rmd_{Z} \, y(\delta Z) = -\Tr(\delta Z)/[2n \,
y(Z)]$. The gradient is then given by $\rmd_{Z} \, y(\delta Z) = \langle
\grad y(Z), \delta Z \rangle = \Real \left\{ \Tr \big[ \grad y(Z) \delta
 Z \big] \right\} = \Tr \big[ \grad y(Z) \delta Z \big]$. Hence,
\begin{equation}
\grad \, [y(Z)] = -\frac{\openone}{[2n \, y(Z)]}.
\end{equation}

We now turn to the problem of differentiating $Z$. Differentiating a
square root in a commutative setting is straight-forward; we used this
fact to differentiate $y$. But for matrices, non-commutivity may cause
trouble. So, we begin by rewriting the definition of $Z$ as $Z^{2} =
\Gamma^{\dag} \Gamma$, and differentiating both sides:
\begin{equation}
\label{eqna:diffZ}
Z \rmd_{\Gamma} Z(\delta \Gamma) + \rmd_{\Gamma} Z(\delta \Gamma) Z =
\delta \Gamma^{\dag} \Gamma + \Gamma^{\dag} \delta \Gamma.
\end{equation}
Define $\nu : \Mcal_{n_{\rme}} (\Cbb) \to \Cbb^{n_{\rme}^{2}}$ to be the
linear operator that stacks the columns of a square matrix to create a
vector. If we let $\Mcal_{n_{\rme}} (\Cbb)$ and $\Cbb^{n_{\rme}^{2}}$
both be \emph{real} Hilbert spaces with respective inner products
$\langle A, B\rangle = \Real \left[ \Tr \left( A^{\dag}B \right)
\right]$ and $\langle \mathbf{x},\mathbf{y}\rangle = \Real \left(
\mathbf{x}^{\dag} \mathbf{y} \right)$, then $\nu$ is a linear isometry
\cite{Naylor82a} between these two spaces:
\begin{equation}
\langle A, B\rangle = \sum_{i,j}\Real \left( A_{ij} \right) \Real
\left( B_{ij} \right) + \Imag \left( A_{ij} \right) \Imag \left( B_{ij}
\right) = \langle \nu(A), \nu(B) \rangle,
\end{equation}
for all $A, B \in \Mcal_{n_{\rme}} (\Cbb)$. Hence $\nu^{*} = \nu^{-1}$
is the ``matrix-ization'' operator that returns a square matrix.

From the discussion of matrix equations and the Kronecker product in
\cite{Horn91a}, we can rewrite (\ref{eqna:diffZ}) as 
\begin{equation}
\left[ Z^{\rmT} \otimes \openone + \openone \otimes Z \right] \nu \left[
\rmd_{\Gamma} Z (\delta \Gamma) \right] = \nu \big( \delta
\Gamma^{\dag} \Gamma + \Gamma^{\dag} \delta \Gamma\big).
\end{equation}
By construction, $Z$ will be Hermitian and positive semi-definite; the
transpose $Z^{\rmT}$ will also be Hermitian and will have the same
eigenvalue decomposition as $Z$. Because of this, we know that $Z^{\rmT}
\otimes \openone + \openone \otimes Z$ will also be Hermitian and
positive semi-definite (the eigenvalues of the Kronecker sum are just
the set of all pairwise sums of the eigenvalues of $Z$). If we assume
that $Z$ is positive \emph{definite}, then $Z^{\rmT} \otimes \openone +
\openone \otimes Z$ will also be positive definite and this
linear system will have a unique solution:
\begin{equation}
\rmd_{\Gamma} Z (\delta \Gamma) = \nu^{*} \Big[ \big( Z^{\rmT} \otimes
\openone + \openone \otimes Z \big)^{-1} \nu \big( \delta \Gamma^{\dag}
\Gamma + \Gamma^{\dag} \delta \Gamma \big) \Big]
\end{equation}
Let $\Lcal_{\Gamma} : \Mcal_{n_{\rme}} (\Cbb) \to \Mcal_{n_{\rme}}
(\Cbb)$ be given by $\Lcal_{\Gamma} (A) = A^{\dag} \Gamma +
\Gamma^{\dag} A$. Then, considering $\Mcal_{n_{\rme}} (\Cbb)$ to be a
\emph{real} Hilbert space with the real Hilbert-Schmidt inner product,
$\Lcal_{\Gamma}$ is a linear operator, and we can find its adjoint:
\begin{subequations}
\begin{align}
\left\langle A, \Lcal_{\Gamma}^{*} (B) \right\rangle & =
\left\langle \Lcal_{\Gamma}(A), B \right\rangle = \left\langle
A^{\dag}\Gamma + \Gamma^{\dag}A, B \right\rangle, \\
& = \Real \left[ \Tr \left( \Gamma^{\dag} AB + A^{\dag} \Gamma B \right)
 \right], \\ 
& = \Real \left\{ \Tr \left[ \left( \Gamma B^{\dag} \right)^{\dag} A +
 (\Gamma B)^{\dag} A \right] \right\} = \left\langle A, \Gamma \left(
 B + B^{\dag} \right) \right\rangle,
\end{align}
\end{subequations}
for a given $B \in \Mcal_{n_{\rme}} (\Cbb)$ and all $A \in
\Mcal_{n_{\rme}} (\Cbb)$. Hence, $\Lcal_{\Gamma}^{*}(B) = \Gamma \left(
  B + B^{\dag} \right)$. Then,
\begin{subequations}
\begin{align}
\rmd_{\Gamma} Z & = \nu^{*} \circ \big( Z^{\rmT} \otimes \openone +
\openone \otimes Z \big)^{-1} \circ \nu \circ \Lcal_{\Gamma}, \\
\intertext{and}
\rmd_{\Gamma} Z^{*} & = \Lcal_{\Gamma}^{*} \circ \nu^{*} \circ
\big( Z^{\rmT} \otimes \openone + \openone \otimes Z \big)^{-1} \circ \nu.
\end{align}
\end{subequations}
Using $\grad y(Z) = -\openone/[2n \, y(Z)]$ derived previously yields
\begin{equation}
\rmd_{\Gamma} Z^{*} \big[ \grad y(Z) \big] = -\frac{1}{2n \,
 y(Z)}\Lcal_{\Gamma}^{*} \circ \nu^{*} \circ \big( Z\otimes
\openone + \openone \otimes Z \big)^{-1} \circ \nu(\openone).
\end{equation}
Note that $\nu^{*} \circ \left( Z^{\rmT} \otimes \openone + \openone
\otimes Z \right)^{-1} \circ \nu(\openone)$ is just the matrix
solution to the problem $ZA + AZ = \openone$, and therefore is $A =
\frac{1}{2}Z^{-1}$, since we are assuming $Z$ is not just positive
semi-definite, but positive \emph{definite}. Then,
\begin{equation}
\rmd_{\Gamma}Z^{*} \big[ \grad y(Z) \big] = -\frac{1}{4n \,
 y(Z)} \Lcal_{\Gamma}^{*} \left( Z^{-1} \right) = -\frac{1}{2n \,
 y(Z)} \Gamma Z^{-1}.
\end{equation}

For the differential of $\Gamma_{V}$, we can extend $\Gamma_{V}$ to
$\tilde{\Gamma}_{V} : \Mcal_{n} (\Cbb) \to \Mcal_{n_{\rme}} (\Cbb)$, by
letting $\tilde{\Gamma}_{V}(A) = \Trs \left( A \, V^{\dag}
\right)$. Then, $\tilde{\Gamma}_{V}$ is a linear map, so the
differential of $\tilde{\Gamma}_{V}$ is just itself, i.e., $\rmd_{A}
\tilde{\Gamma}(\delta A) = \tilde{\Gamma}_{V}(\delta A)$. For any
$\delta U \in T_{U} \Uc$, $\rmd_{U} \Gamma_{V}(\delta U) = \rmd_{U}
\tilde{\Gamma}_{V}(\delta U) = \tilde{\Gamma}_{V}(\delta U)$. We can
find the adjoint of $\rmd_{U} \Gamma_{V}$ by
\begin{subequations}
\begin{align}
\left\langle \left[ \rmd_{U} \Gamma_{V} \right]^{*} (B), \delta U
\right\rangle & =
\left\langle B, \rmd_{U} \Gamma_{V} (\delta U) \right\rangle =
\Real \Big\{ \Tre \big[ B^{\dag} \rmd_{U} \Gamma_{V}(\delta U) \big]
\Big\}, \\
& = \Real \Big\{ \Tre \big[ B^{\dag} \Trs \left( \delta U \, V^{\dag}
\right) \big] \Big\}, \\ 
& = \Real \Big\{ \Tr \big[ \delta U \, V^{\dag} \left( \openone \otimes
B^{\dag} \right) \big] \Big\}, \\
& = \frac{1}{2} \left\langle \big[ (\openone \otimes B) V - U
V^{\dag} \left( \openone \otimes B^{\dag} \right) U \big], \delta U
\right\rangle,
\end{align}
\end{subequations}
where the last step involves projecting $(\openone \otimes B) V$ onto
$T_{U} \Uc$. Hence, $\left( \rmd_{U} \Gamma_{V} \right)^{*}(B)
= \frac{1}{2} \big[ (\openone \otimes B) V - U V^{\dag} \left( \openone
\otimes B^{\dag} \right) U \big]$ for any $B \in \Mcal_{n_{\rme}}
(\Cbb)$.

Combining these components, we find that
\begin{subequations}
\begin{align}
\label{eqna:gradJ}
\grad \Delta_{\rHS}^{[V]} (U) & = \left( \rmd_{U} \Gamma_{V}
\right)^{*} \Big[ \left( \rmd_{\Gamma_{V}(U)} Z \right)^{*} \Big( \grad
y \big\{ Z \left[ \Gamma_{V}(U) \right] \big\} \Big) \Big], \\ 
& = -\frac{1}{4n \, \Delta_{\rHS}^{[V]}(U)} \left( R - UR^{\dag} U
\right),
\end{align}
\end{subequations}
where $R = \left( \openone \otimes \left\{ \Gamma_{V}(U)
Z^{-1} \left[ \Gamma_{V}(U) \right] \right\} \right) V = \big[ \openone
\otimes \left( \Omega W^{\dag} \right) \big] V$ and $\Gamma_{V}(U) =
\Omega SW^{\dag}$ is the singular value decomposition of
$\Gamma_{V}(U)$. Since $V \in \Uc$ and $\Omega W^{\dag} \in \Ue$, we
have $R \in \Uc$. Note that $R$ is the element of the equivalence class
$[V]$ that is closest to $U$ under the Hilbert-Schmidt distance, i.e.,
$R = (\Is \otimes \hat{\Phi})V$, where $\hat{\Phi} := \arg \min_{\Phi}
\big\{ \|U - (\Is \otimes \Phi) V \|_{\rHS} \big\}$.

\subsection{Explicitly coupling the Schr\"{o}dinger equation to the
 objective functional}

In this section, we consider the problem of coupling the Schr\"{o}dinger
equation to the kinematic cost function $\Delta_{\rHS}^{[V]}$ to arrive
at a gradient flow through the space of controls $\Kbb_{\tf}$
leading to an optimal control. To do this, we need to clarify the
definition and geometry of $\Kbb_{\tf}$. Let $\eta : [0, \tf]
\to \Rbb$, the ``shape function'', be positive almost everywhere and
continuous (and therefore bounded). Define the inner product on
$\Kbb_{\tf}$ as
\begin{equation}
\label{eqa:control-space}
\langle f, g \rangle_{\Kbb_{\tf}} := \int_{0}^{\tf} \frac{f(t)
 g(t)}{\eta(t)} \, \rmd t,
\end{equation}
where $\Kbb_{\tf}$ is the set of (equivalence classes of)
functions $f \in L^{2}([0,\tf]; \Rbb)$ such that $\| f \|_{\Kbb_{\tf}} <
\infty$. Bilinearity, symmetry, and non-negativity of the inner product
$\langle \cdot, \cdot \rangle_{\Kbb_{\tf}}$ are clear from the
definition. When $\eta(t) \equiv 1$, $\Kbb_{\tf}$ is $L^{2}([0,\tf];
\Rbb)$ with the standard inner product. Other choices of $\eta(t)$
change the standard geometry, moving undesirably-shaped functions far
away from the origin, or out to infinity, where they are less likely to
be the targets of an optimization on $\Kbb_{\tf}$. This point is
discussed in more detail throughout this section. For the optimizations
performed in this work, $\eta(t) = \sin(\pi t/\tf)$.

To make this approach (and geometric interpretation) rigorous, we need
to verify that $\Kbb_{\tf}$ is indeed a Hilbert space under the
inner product $\langle \cdot, \cdot \rangle_{\Kbb_{\tf}}$ in
(\ref{eqa:control-space}). Suppose $f$ is a measurable function such
that $\|f\|_{\Kbb_{\tf}} < \infty$. Because $\eta$ is positive
almost everywhere and bounded above, $1/\eta$ is positive everywhere and
bounded below by some $\beta > 0$. Then,
\begin{equation}
\|f\|_{\Kbb_{\tf}}^{2} = \int_{0}^{\tf} \frac{f^{2}(t)}{\eta(t)} \,
\rmd t \ > \ \beta \|f\|_{L^{2}}^{2} \ := \ \beta \int_{0}^{\tf}
f^{2}(t)\, \rmd t.
\end{equation}
Hence, $\|f\|_{L^{2}} < \infty$, so that $f \in L^{2}([0,\tf];
\Rbb)$, where $\| \cdot \|_{L^{2}}$ denotes the $L^{2}$ norm on $[0,
\tf]$. If $f(t) \neq 0$ on a set of non-zero measure, then $f^{2}(t) >
0$ on a set of non-zero measure [analogously for $\eta(t)$ and $f^{2}(t)
/ \eta(t)$]. Hence, $\|f\|_{\Kbb_{\tf}} > 0$, so we have non-degeneracy,
and $\Kbb_{\tf}$ is thus an inner product space.

Suppose that $\{f_{k}\}$ is a Cauchy sequence in $\Kbb_{\tf}$
\cite{Naylor82a}. Then $g_{k} := f_{k}/\sqrt{\eta}$ is a Cauchy sequence
in $L^{2}$, i.e., $\| g_{k} - g_{\ell} \|_{L^{2}} = \| f_{k} - f_{\ell}
\|_{\Kbb_{\tf}} \to 0$. Since $L^{2}([0,\tf]; \Rbb)$ is complete, this
implies that $g_{k} \to g \in L^{2}([0,\tf]; \Rbb)$. Let $g := f /
\sqrt{\eta}$, thus,
\begin{subequations}
\begin{align}
\|f\|_{\Kbb_{\tf}}^{2} & = \int_{0}^{\tf} f^{2}(t) / \eta(t) \, \rmd
t = \int_{0}^{\tf} g^{2}(t)\, \rmd t = \|g\|_{L^{2}}^{2} < \infty, \\
\intertext{and}
\| f_{k} - f \|_{\Kbb_{\tf}}^{2} & = \int_{0}^{\tf} \frac{ \big[
f_{k}(t) - f(t) \big]^{2}}{\eta(t)} \, \rmd t, \\
& = \int_{0}^{\tf} \big[ g_{k}(t) - g(t) \big]^{2} \, \rmd t = \| g_{k}
- g \|_{L^{2}}^{2} \to 0,
\end{align}
\end{subequations}
so that $f \in \Kbb_{\tf}$ and $f_{k} \to f$ in the
topology induced by $\langle \cdot, \cdot \rangle_{\Kbb_{\tf}}$.
Hence, $\Kbb_{\tf}$ with $\langle \cdot, \cdot
\rangle_{\Kbb_{\tf}}$ is a complete inner product space, i.e., a
Hilbert space. We have not simply effected a change in geometry, but we
also have restricted the domain to a linear subspace of $L^{2}([0,\tf];
\Rbb)$. If $C \in L^{2}([0,\tf]; \Rbb) / \Kbb_{\tf}$, then $\J =
\infty$, so that $C$ is not a viable solution. This new geometry puts
desirable functions near the origin and undesirable functions out near
infinity, which makes this optimization different from the unshaped
version.

Now, consider the mapping $U_{\tf} : \Kbb_{\tf} \to \Uc$
introduced in section~\ref{ssec:algorithm}. The differential of the
final-time evolution operator $U_{\tf}$ with respect to the control
field depends on evolution operators at all times in $[0,\tf]$. It is
shown in \cite{Dominy08a} that
\begin{equation}
\rmd_{C} U_{\tf}(\delta C) = \rmi U_{\tf}(C) \int_{0}^{\tf} U^{\dag}(t;
C) \mu U(t; C) \delta C(t) \, \rmd t,
\end{equation}
where $\mu$ is the dipole moment operator [e.g., $\mu =
\sum_{i}\mu_{i}S_{ix}$ in (\ref{eq:Hamiltonian2})]. With the usual
Hilbert-Schmidt inner product as the Riemannian metric on $\Uc$, we can
find the adjoint of this differential: for any $A \in T_{U_{\tf}(C)}
\Uc$ and all $\delta C \in T_{C} \Kbb_{\tf} \simeq
\Kbb_{\tf}$,
\begin{subequations}
\label{eqna:dV-adj-deriv}
\begin{align}
\left\langle \big( \rmd_{C}U_{\tf} \big)^{*}(A), \delta C
\right\rangle_{\Kbb_{\tf}} & = \left\langle A, \rmd_{C}
 U_{\tf}(\delta C) \right\rangle_{T\Uc}, \\
& = \Real \left\{ \Tr \left[ \rmi A^{\dag} U_{\tf}(C) \int_{0}^{\tf}
 U^{\dag}(t; C) \mu U(t; C) \delta C(t) \, \rmd t \right] \right\}, \\
& = \int_{0}^{\tf} \Real \bigg\{ \Tr \Big[ \rmi A^{\dag} U_{\tf}(C)
 U^{\dag}(t; C) \mu U(t; C) \Big] \delta C(t) \, \rmd t \bigg\}, \\
& = \left\langle \eta \, \Real \Big\{ \Tr \big[ \rmi A^{\dag} U_{\tf}(C)
 U^{\dag}(\cdot; C) \mu U(\cdot; C) \big] \Big\}, \delta C
\right\rangle_{\Kbb_{\tf}}.
\end{align}
\end{subequations}
So that 
\begin{equation}
\left[ \left( \rmd_{C} U_{\tf} \right)^{*} (A) \right](t) = -\eta(t)
\Imag \big\{ \Tr \left[ A^{\dag} U_{\tf}(C) U^{\dag}(t; C) \mu U(t; C)
\right] \big\}.
\end{equation}

With $\J(C)$ in (\ref{eq:functional-general}), using previous
arguments regarding the chain rule on gradients, we obtain
\begin{subequations}
\label{eqa:gradient}
\begin{align}
& \big( \grad \J(C) \big)(t) = \Big[ \big( \rmd_{C} U_{\tf} \big)^{*}
\big( \grad \Delta_{\rHS}^{[V]} [U_{\tf}(C)] \big) \Big](t) + \alpha
C(t), \\
\nonumber 
& = \frac{\eta(t)}{4n \, \Delta_{\rHS}^{[V]} [U_{\tf}(C)]} \Imag \Big(
\Tr \Big\{ \big[ R^{\dag} - U^{\dag}(\tf; C) R U^{\dag}(\tf; C) \big]
U_{\tf}(C) U^{\dag}(t; C) \mu U(t; C) \Big\} \Big) \\ 
& \hspace{0.48cm} + \alpha C(t), \\
& = \frac{\eta(t)}{4n \, \Delta_{\rHS}^{[V]} [U_{\tf}(C)]} \Imag \Big(
\Tr \Big\{ \big[ R^{\dag} U_{\tf}(C) - U^{\dag}(\tf; C) R \big]
U^{\dag}(t; C) \mu U(t; C) \Big\} \Big) + \alpha C(t).
\end{align}
\end{subequations}

Observe that $\grad\tilde{\Delta}_{\rHS}$ is continuous on $\Uc$, and
therefore bounded. It is shown in \cite{Dominy08a} that $\big( \rmd_{C}
U_{\tf} \big)^{*}$ is uniformly bounded for all $C \in \Kbb_{\tf}$. In
fact, $\big\| \big( \rmd_{C} U_{\tf} \big)^{*} \big\|_{\infty} \leq \| \mu
\|_{\rHS} \| \eta \|_{L^{1}}^{\frac{1}{2}}$, where $\| \cdot \|_{L^{1}}$
denotes the $L^{1}$ norm.\footnote{Here, $\big\| \big( \rmd_{C} U_{\tf}
\big)^{*} \big\|_{\infty}$ denotes the operator norm \cite{Horn90a}:
$\displaystyle{\big\| \big( \rmd_{C} U_{\tf} \big)^{*} \big\|_{\infty}
:=  \sup_{\substack{\delta U \in T \Uc \\ \delta U \neq 0}}
\frac{\left\| \big( \rmd_{C} U_{\tf})^{*} (\delta U \big)
\right\|_{\Kbb_{\tf}}}{\| \delta U \|_{\rHS}}}$.} Therefore, there
exists a $\kappa > 0$ such that
\begin{equation}
\left\| \big( \rmd_{C} U_{\tf} \big)^{*} \big( \grad
\Delta_{\rHS}^{[V]} [U_{\tf}(C)] \big) \right\| \leq \kappa
\end{equation}
for all $C \in \Kbb_{\tf}$. Then, since 
\begin{equation}
C(t) = -\frac{\eta(t)}{4\alpha n \,\Delta_{\rHS}^{[V]} [U_{\tf}(C)]}
\Imag \Big( \Tr \Big\{ \big[ R^{\dag} U_{\tf}(C) - U^{\dag}(\tf; C) R
\big] U^{\dag}(t; C) \mu U(t; C) \Big\} \Big)
\end{equation}
at a local extremum of $\J$, all local extrema of $\J$ must lie within
the hypersphere $\| C \| \leq \kappa/\alpha$. From this perspective,
the role of the shape function can be interpreted as moving
undesirably-shaped controls outside of this hypersphere, to the region
where they can no longer act as local extrema of $\J$. Note that both
the space $K_{\tf}$ and the control objective $\J$ depend on the shape
function $\eta(t)$. When a shape function is used in an iterative
optimization routine via (\ref{eqa:gradient}), this results in the
minimization of a different (though conceptually related) objective on a
different space than when $\eta(t) \equiv 1$.

\section*{References}
\bibliography{Grace_Distance_NJP.bib}
\bibliographystyle{iopart-num}

\end{document}